\theoremstyle{plain}
\newtheorem{thm}{\protect\theoremname}
\theoremstyle{plain}
\newtheorem{prop}[thm]{\protect\propositionname}
\newenvironment{proof}[1][\protect\proofname]{\par
	\normalfont\topsep6\p@\@plus6\p@\relax
	\trivlist
	\itemindent\parindent
	\item[\hskip\labelsep\scshape #1]\ignorespaces
}{%
	\endtrivlist\@endpefalse
}
\providecommand{\proofname}{Proof}
\newcommand{\tr}{\mathrm{Tr}} 
\newcommand{\ketbra}[2]{\ket{#1} \! \bra{#2}}
\providecommand{\theoremname}{Theorem}
\providecommand{\propositionname}{Proposition}
\newtheorem{lemma}{Lemma}
\begin{document}
\title{Entanglement Generation via Hamiltonian Dynamics Having Limited Resources}

\author{Moein Naseri}
\address{International Centre for Theory of Quantum Technologies, University of Gda\'nsk, Jana Ba\.zy\'nskiego 1A, 80-309 Gda\'nsk, Poland}

 \begin{abstract}
 We investigate the fundamental limits of entanglement generation under bipartite Hamiltonian dynamics when only finite physical resources-specifically, bounded energy variance-are available. Using the relative entropy of entanglement, we derive a closed analytical expression for the instantaneous entanglement generation rate for arbitrary pure states and Hamiltonians expressed in the Schmidt basis. We find that constraints based solely on the mean energy of the Hamiltonian are insufficient to bound the entanglement generation rate, whereas imposing a variance constraint ensures a finite and well-defined maximum. We fully characterize the Hamiltonians that achieve this optimal rate, establishing a direct relation between their imaginary components in the Schmidt basis and the structure of the optimal initial states. For systems without ancillas, we obtain a closed-form expression for the maximal rate in terms of the surprisal variance of the Schmidt coefficients and identify the family of optimal states and Hamiltonians. We further extend our analysis to scenarios where Alice and Bob may employ local ancillary systems: using a matrix-analytic framework and a refined description of the Hamiltonians allowed by the physical constraints, we derive an explicit optimization formula and characterize the attainable enhancement in entanglement generation. 
\end{abstract}

\maketitle

\section{Introduction}

Quantum resource theories provide a systematic framework for characterizing and quantifying the fundamental features of quantum systems, as well as for investigating how these features can be harnessed in practical quantum technologies \cite{chitambar2019quantum}. They formalize the notion of "resources" by distinguishing between operations that are considered free- those accessible without cost -and states or processes that serve as valuable resources under such restrictions. Over the years, a broad spectrum of quantum resource theories has been formulated to capture different manifestations of quantumness under operational restrictions. Prominent examples include the resource theories of  \cite{AhmadiJenningsRudolph2013,MarvianSpekkensZanardi2015, KudoTajima2022,MartinelliSoares2019,WangWilde2019,Gour2025},  nonlocality \cite{HorodeckiGrudkaJoshi2015,Karvonen2021,LipkaBartosikDucuara2021,ZanfardinoRogaTakeokaIlluminati2023,wolfe2020quantifying}, coherence \cite{Baumgratz2014,WinterYang2016,ChitambarHsieh2016,StreltsovRanaBoesEisert2017,SAP_17,BenDanaGarciaMejattyWinter2017}, purity \cite{Streltsov2018,Scandolo2016,WangLuoXi2021}, thermodynamics \cite{GourMullerNarasimhacharSpekkensHalpern2015,SparaciariOppenheimFritz2017,Lostaglio2018,NarasimhacharGour2017,Sapienza2019} and entanglement \cite{Horodecki2009,PlenioVirmani2005,ContrerasTejada2019,WangWilde2020,BaumlDasWangWilde2019}, which underlies many nonclassical information-processing tasks. A detailed overview of these and related frameworks can be found in the comprehensive review~\cite{chitambar2019quantum}. Among this diverse landscape of quantum resource theories, two paradigmatic cases have attracted particular attention: the resource theory of entanglement \cite{RevModPhys.81.865} and the resource theory of quantum coherence \cite{SAP_17,Wu2021}. The resource theory of entanglement investigates the capabilities and constraints of multiple parties who are restricted to operating in spatially separated quantum laboratories while being limited to exchanging information through classical channels \cite{RevModPhys.81.865}. In contrast, the resource theory of quantum coherence focuses on the possibilities and limitations faced by an agent whose ability to create, manipulate, and preserve coherent superpositions of quantum states is restricted \cite{SAP_17,Wu2021}.

A quantum resource theory can be formally specified by a pair $(F, O_{F})$, where $F$ denotes the collection of \emph{free states}, namely those quantum states that are considered accessible at no cost within the theory, and $O_{F}$ designates the set of \emph{free operations}, typically modeled as completely positive and trace-preserving (CPTP) maps, which are the transformations allowed under the given restrictions (supported by physical principles). These two ingredients are commonly required to satisfy \cite{Chitambar2019,naseri2025quantum}:
\begin{align}
    \forall\text{  }\Lambda \in O_{F}\text{,  }\rho\in F\text{:  }\Lambda(\rho)\in F.
\end{align}
Put differently, free operations are required to transform any free state into another free state. This condition serves as a key consistency check for the validity of a given resource theory. For instance, within the framework of the entanglement resource theory, the set of local operations and classical communication (LOCC) constitutes the free operations. This choice is physically intuitive, as it reflects the practical limitation that spatially separated parties can manipulate their own subsystems locally with exchanging only classical information~\cite{PhysRevA.54.3824}. While in the coherence resource theory, the key characteristic of free operations is their incapacity to induce coherence from incoherent states \cite{SAP_17}.

A central question which arises within the framework of quantum resource theories is the feasibility of transforming one state into another using only the allowed (\emph{free}) operations \cite{Chitambar2019,naseri2025quantum}. In practical settings where an agent typically has access to multiple copies of a given resource state, this problem can be formulated as follows: given $n$ copies of a state $\rho$, what is the maximal number $m$ of copies of a target state $\sigma$ that can be obtained through free operations? The ratio $\frac{m}{n}$ then quantifies the achievable conversion rate. In the asymptotic regime, where $n$ becomes large, this ratio represents the maximal asymptotic conversion rate from $\rho$ to $\sigma$ which we denote it by $R(\rho \to \sigma)$ \cite{Chitambar2019,horodecki2013quantumness}. Regarding the bipartite entanglement resource theory with the parties $A$ and $B$ (corresponding to Alice's and Bob's labs), it has been shown that \cite{Bennett1996Concentrating,Vidal2001Irreversibility,horodecki2013quantumness}:
\begin{align*}
    R\big(\ket{\psi}_{AB}\to \ket{\phi}_{AB}\big)= \frac{S(\psi_{A})}{S(\phi_{A})},
\end{align*}
where $S(.)$ is the entanglement entropy and $\psi_{A}=\tr_{
B}(\ket{\psi}_{AB}\bra{\psi})$. If we consider $\ket{\phi}_{AB}$ to be the maximally entangled state $\ket{\phi_{\max}}_{AB}$ then we have:
\begin{align}
     R\big(\ket{\psi}_{AB}\to \ket{\phi_{\max}}_{AB})= S(\psi_{A}\big)
\end{align}
which is also an entanglement measure and operationally quantifies the maximal rate of distilling maximally entangled states from the state $\ket{\psi}_{AB}^{\otimes n}$ when $n$ is large. A measure of a resource is a function $M:D(H)\to R^{+}$ satisfying \cite{Chitambar2019}:
\begin{itemize}
        \item $M(\rho)=0$ iff $\rho\in F$.
        \item (Monotonicity) $\forall\Lambda \in O_{F}$ we have $M\big(\Lambda(\rho)\big)\leq M(\rho)$.
    \end{itemize} 
where $D(H)$ is the set of density matrices on the Hilbert space $H$.

Since quantum entanglement serves as a cornerstone of quantum information science and underpins the development of advanced quantum technologies, devising and understanding efficient methods for its generation are of fundamental importance~\cite{zhang2023entanglement}. One approach to generating entanglement involves applying a static quantum channel~$\Lambda$, which can produce entangled states from initially separable ones provided that~$\Lambda$ is not a separable operation. The problem of entanglement generation via unitary dynamics has been  investigated in the literature~\cite{dur2001entanglement, bennett2003capacities, bravyi2006lieb, childs2002asymptotic, cirac2001entangling, wang2003entanglement, linden2009entangling}, particularly for Hamiltonians with bounded operator norm.

In this work, we aim to identify and analyze the most efficient protocols for generating quantum entanglement through Hamiltonian dynamics of the form~$U_t = e^{-itH}$. Using the relative entropy of entanglement as a quantitative measure, we systematically investigate how rapidly entanglement can be generated under such dynamics. In particular, the maximal entanglement generation rate is determined which is optimized over all possible initial quantum states~$\rho$ and Hamiltonians~$H$, subject to physical constraints determined by the mean and variance of~$H$. The optimal choices of initial states and Hamiltonians for systems of arbitrary dimension~$d$ are further characterized. 

\begin{figure}
    \centering
    \includegraphics[scale=0.3]{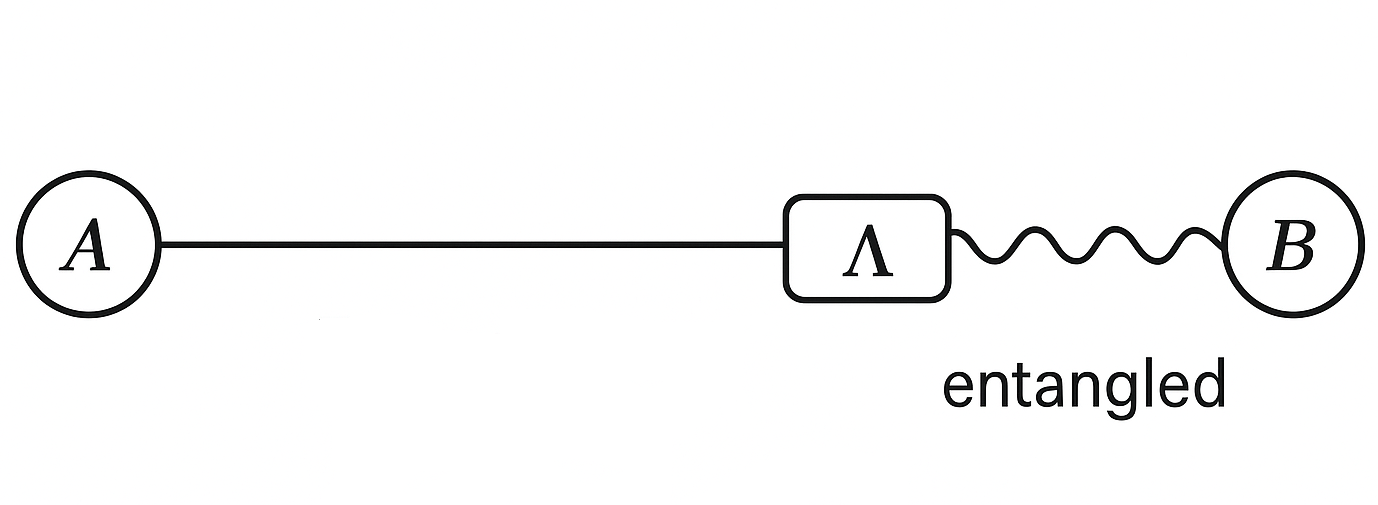}
    \caption{Conceptual overview of entanglement generation mechanisms: Entanglement can be generated from separable inputs via non-separable quantum channels $\Lambda$}
    \label{fig:placeholder}
\end{figure}

\section{Hamiltonians with Bounded Mean and Variance, and Generation of Entanglement}

For a given bipartite Hamiltonian $H_{AB}$, its \emph{entanglement generating capacity} quantifies the maximal rate at which entanglement can be created through its unitary dynamics. 
Concretely, consider the unitary evolution generated by $H_{AB}$, given by $U_t = e^{-i H_{AB} t}$. The entanglement generating capacity, denoted by $\Gamma(H_{AB})$, is defined as the maximal instantaneous rate of change of entanglement at the initial time $t = 0$ i.e.
\begin{equation}
\Gamma(H) = \max_{\rho_{AB}} \left. \frac{d}{dt} E \Big( e^{-i H_{AB} t} \rho_{AB} \, e^{i H_{AB} t} \Big) \right|_{t=0},
\end{equation}
where $E(\cdot)$ is a chosen measure of entanglement and the maximization is taken over all possible initial states $\rho_{AB}$. Here we consider $E(\cdot)$ to be the relative entropy of entanglement.

For a fixed pure state $\ket{\psi}_{AB}$, we denote by $\Gamma(\psi, H)$ the entanglement generation rate of the evolution $e^{-iH_{t}}$ acting on $\ket{\psi}_{AB}$. The following proposition provides a convenient explicit form for $\Gamma(\psi, H)$ when $\ket{\psi}_{AB}$ and $H$ are expressed in a Schmidt basis.

\begin{prop}
The bipartite entanglement generation rate $\Gamma(\psi, H)$ between subsystems $A$ and $B$, with Hilbert space dimensions $d_A$ and $d_B$, can be expressed as
\begin{align}
\Gamma(\psi,H) \;=\; 4 \sum_{i>j}^{\min(d_A,d_B)} C_i C_j \, 
\log\!\left(\frac{C_i}{C_j}\right) H_{I,ji},
\label{Simple-Expression}
\end{align}
where $\ket{\psi}_{AB} \;=\; \sum_{i=1}^{\min(d_A,d_B)} C_i \, \ket{i}_A \ket{i}_B$ and $ \hat{H} \;=\; \sum_{k,l=1}^{d_A} \sum_{m,n=1}^{d_B} H_{kl,mn} \, 
    \ket{k}\bra{l}_A \otimes \ket{m}\bra{n}_B$ which are expanded in a Schmidt basis. Moreover $H_{ji,ji}\equiv H_{ji}$ and $H_{I,ji}$ is the imaginary part of the component $H_{ji}$.
    \end{prop}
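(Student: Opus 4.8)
The plan is to compute the derivative of the relative entropy of entanglement $E_R(\psi_t)$ at $t=0$ directly, exploiting the fact that for a pure state the relative entropy of entanglement is a known, explicit quantity. First I would recall that for a pure bipartite state $\ket{\psi}_{AB}$ with Schmidt decomposition $\ket{\psi}_{AB}=\sum_i C_i\ket{i}_A\ket{i}_B$, the closest separable state (in the relative-entropy sense) is the dephased state $\sigma=\sum_i C_i^2\,\ketbra{ii}{ii}$, so that $E_R(\psi)=S(\sigma)-S(\psi)=-\sum_i C_i^2\log C_i^2 = H(\{C_i^2\})$, the Shannon entropy of the squared Schmidt coefficients. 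The key simplification is that, by a standard argument (the closest separable state varies smoothly and the stationarity of the relative entropy at the optimum kills the first-order contribution of $d\sigma/dt$), differentiating $E_R(\psi_t)$ at $t=0$ reduces to differentiating $-S(\psi_t)+$(a term that can be handled), and since $\psi_t$ stays pure its von Neumann entropy is constant in $t$ to the orders that matter only through the reduced state; so the rate is governed by $\frac{d}{dt} S\big(\mathrm{Tr}_B(\ketbra{\psi_t}{\psi_t})\big)$ evaluated at $t=0$. I would make this reduction precise, citing the earlier discussion of $E_R$ as the chosen measure.

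Second, I would carry out the perturbation theory. Write $\ket{\psi_t}=e^{-iHt}\ket{\psi}$ so $\frac{d}{dt}\ket{\psi_t}\big|_0=-iH\ket{\psi}$, and compute the first-order change of the reduced density matrix $\rho_A(t)=\mathrm{Tr}_B\ketbra{\psi_t}{\psi_t}$. Its derivative at $t=0$ is $\dot\rho_A=-i\,\mathrm{Tr}_B\big([H,\ketbra{\psi}{\psi}]\big)$. Using the formula $\frac{d}{dt}S(\rho_A)= -\mathrm{Tr}(\dot\rho_A\log\rho_A)$ (the $\mathrm{Tr}\dot\rho_A$ term vanishes) and the fact that $\rho_A=\sum_i C_i^2\ketbra{i}{i}$ is diagonal with $\log\rho_A=\sum_i (\log C_i^2)\ketbra{i}{i}$, I would get $\frac{d}{dt}S(\rho_A)\big|_0 = -\sum_i (\log C_i^2)\,\langle i|\dot\rho_A|i\rangle$. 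Then I expand $\langle i|\dot\rho_A|i\rangle$ in terms of the matrix elements $H_{kl,mn}$ of $H$ in the Schmidt product basis: only the off-diagonal Schmidt-to-Schmidt couplings survive, and a short computation produces the bilinear-in-$C$ expression $\langle i|\dot\rho_A|i\rangle = 2\sum_{j\ne i} C_iC_j\,\mathrm{Im}(H_{ij,ij})$ up to conventions. Collecting terms and symmetrizing the double sum over $i,j$ using $\log C_i^2-\log C_j^2 = 2\log(C_i/C_j)$ and $H_{I,ij}=-H_{I,ji}$ yields the stated factor of $4$ and the sum over $i>j$.

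The main obstacle I anticipate is the first step: rigorously justifying that the motion of the closest separable state $\sigma_t$ does not contribute at first order. The cleanest route is to invoke the variational characterization $E_R(\psi_t)=\min_{\sigma\in \mathrm{SEP}} S(\psi_t\|\sigma)$ together with the envelope theorem — since $\sigma_0$ is the minimizer, the total derivative of $E_R$ at $t=0$ equals the partial derivative with respect to $t$ with $\sigma$ held fixed at $\sigma_0$, i.e. $\frac{d}{dt}S(\psi_t\|\sigma_0)\big|_0$. One must check the required smoothness/regularity (the minimizer is unique and the map is differentiable for pure states, which is classical), and handle the fact that $S(\rho\|\sigma)$ is $+\infty$ when supports mismatch — but here $\psi_0$ and $\sigma_0$ have nested supports, so near $t=0$ the relevant quantities are finite and smooth. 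After this, I would expand $S(\psi_t\|\sigma_0) = -S(\psi_t) - \mathrm{Tr}(\psi_t\log\sigma_0)$; the first term is constant (purity preserved), and the second, after tracing, reproduces exactly the reduced-state computation above, so the two approaches agree and give \eqref{Simple-Expression}. I would end by remarking that the bookkeeping of complex conjugates and the Schmidt-basis index conventions is where sign errors are most likely, so that part deserves an explicit, careful line of calculation in the final writeup.
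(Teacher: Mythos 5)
Your proposal is correct and follows essentially the same route as the paper's proof: compute $\dot\rho_A=-i\,\mathrm{Tr}_B\big([H,\ketbra{\psi}{\psi}]\big)$, use that $\rho_A$ is diagonal in the Schmidt basis so the rate is $-\mathrm{Tr}(\dot\rho_A\log\rho_A)$, and collect terms via the antisymmetry $H_{I,ij}=-H_{I,ji}$ to obtain the factor of $4$ and the sum over $i>j$. The only difference is that you explicitly justify the reduction from the relative entropy of entanglement to the entanglement entropy of the reduced state (via the closest-separable-state/envelope argument), a step the paper leaves implicit since $E_R$ coincides with $S(\rho_A)$ for pure states.
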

\begin{proof}
    See Supplemental Material.
\end{proof}

It follows that $\Gamma(\psi,H)$ depends solely on the \emph{imaginary part} of $H$ when represented in the Schmidt basis (of $\ket{\psi}_{AB}$), and only on the subspace where $\ket{\psi}_{AB}$ has support.  

In this letter we are interested in optimizing $\Gamma(\psi,H)$ over the Hamiltonians with bounded mean energy or variance, henceforth we proceed with finding explicit expressions for these constraints. 

\begin{lemma}
The mean energy of $H$ with respect to the state $\ket{\psi}$ is given by
\begin{equation}
\bar{E}_{\psi} \equiv \bra{\psi}H\ket{\psi} = \sum_{i,j}^{d_{A}} C_{i} C_{j} H_{R,ij},
\end{equation}
where $H_{R,ij}$ denotes the real part of the matrix elements $H_{ij}$ in the Schmidt basis.
\label{Mean-Energy-Expression}
\end{lemma}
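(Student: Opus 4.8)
The plan is to compute $\bra{\psi}H\ket{\psi}$ directly from the two Schmidt-basis expansions already introduced in the Proposition, and then to use the Hermiticity of $H$ to discard the imaginary contributions.

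First I would substitute $\ket{\psi}_{AB}=\sum_i C_i\,\ket{i}_A\ket{i}_B$ on both sides of $\bra{\psi}\hat H\ket{\psi}$, giving $\bar E_{\psi}=\sum_{i,j}C_iC_j\,\bra{i}_A\bra{i}_B\,\hat H\,\ket{j}_A\ket{j}_B$; here the Schmidt coefficients $C_i$ are taken real and nonnegative, and the sum may be written over $1,\dots,d_A$ since $C_i=0$ whenever $i>\min(d_A,d_B)$. Next, inserting $\hat H=\sum_{k,l,m,n}H_{kl,mn}\,\ketbra{k}{l}_A\otimes\ketbra{m}{n}_B$ and using orthonormality of the Schmidt bases, the only term that survives has $k=m=i$ and $l=n=j$, so that $\bra{i}_A\bra{i}_B\,\hat H\,\ket{j}_A\ket{j}_B=H_{ij,ij}\equiv H_{ij}$. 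This already yields $\bar E_{\psi}=\sum_{i,j}C_iC_j\,H_{ij}$.

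It then remains to replace $H_{ij}$ by its real part. Writing $H_{ij}=H_{R,ij}+iH_{I,ij}$, Hermiticity of $H$ (equivalently of $\hat H$) forces $H_{ij,ij}=\overline{H_{ji,ji}}$, i.e. $H_{R,ij}=H_{R,ji}$ and $H_{I,ij}=-H_{I,ji}$. Since the prefactor $C_iC_j$ is symmetric under $i\leftrightarrow j$, the antisymmetric imaginary part sums to zero, $\sum_{i,j}C_iC_j\,H_{I,ij}=0$, leaving $\bar E_{\psi}=\sum_{i,j}C_iC_j\,H_{R,ij}$, as claimed.

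There is no genuine obstacle here: the computation is elementary. The only points that need a bit of care are the bookkeeping of the composite-index convention $H_{ij}\equiv H_{ij,ij}$ used in the Proposition, and the observation — parallel to the one underlying Eq.~(\ref{Simple-Expression}) — that only the blocks of $H$ that are diagonal in the Schmidt labels contribute to the mean energy, with the real parts now playing the role that the imaginary parts played for the generation rate.
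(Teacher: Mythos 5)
Your proof is correct and follows essentially the same route as the paper's: substitute the Schmidt expansions, use orthonormality to reduce to the diagonal blocks $H_{ij,ij}$, and invoke Hermiticity to keep only the real parts. If anything, your symmetry argument (symmetric weight $C_iC_j$ annihilating the antisymmetric $H_{I,ij}$) is stated more cleanly than the paper's pairing of $i>j$ terms, and it correctly retains the diagonal $i=j$ contributions that the paper's final displayed line appears to drop.
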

\begin{proof}
    We have:
    \begin{align*}
        \bra{\psi}H\ket{\psi}=(\sum_{i=1}^{d_{A}}C_{i}\bra{ii})\\(\sum_{kl}^{d_{A}}\sum_{mn}^{d_{B}}H_{kl,mn}\ket{k}\bra{l}\otimes \ket{m}\bra{n})(\sum_{j=1}^{d_{A}}C_{j}\ket{jj})\notag \\
     =\sum_{i,j=1}^{d_{A}}C_{i}C_{j} H_{ij,ji}=\sum_{i,j=1}^{d_{A}}H_{ij}\notag \\
     \sum_{i>j}^{d_{A}}C_{i}C_{j}(H_{ij}+H_{ji})=2\sum_{i>j}^{d_{A}}C_{i}C_{j}H_{R,ij}.
   \end{align*}
\end{proof}
This lemma along with the expression in \ref{Simple-Expression} surprisingly conclude that having bounded mean energy does not guarantee the boundedness of $\Gamma(\psi,H)$, and actually one can achieve arbitrarily large entanglement generation rate for some fixed $\ket{\psi}$  (as $\bar{E}_{\psi}$ depends only on the real parts,  and $\Gamma(.\, ,.)$ depends only on the imaginary parts of the elements of $H$ in the Schmidt basis with respect to $\ket{\psi}$).

\begin{lemma}
The standard deviation (variance) of the Hamiltonian $H$ on the state $\ket{\psi}$ can be decomposed as
\begin{align}
\Delta E^2(\psi,H) = \Delta E^2(\psi,H_{R}) + \bra{\psi}H_{I}H_{I}^{T}\ket{\psi},
\label{Variance-2Part}
\end{align}
where $H_{R}$ and $H_{I}$ are respectively the real and imaginary parts of $H$ in the Schmidt basis.
\end{lemma}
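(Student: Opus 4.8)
The plan is to work entirely in the Schmidt basis of $\ket{\psi}_{AB}$, in which the coefficients $C_i$ are real and nonnegative, so that $\ket{\psi}$ is represented by a \emph{real} vector. Writing $H = H_R + i H_I$ with $H_R$ and $H_I$ the real matrices collecting the real and imaginary parts of the entries of $H$ in this basis, Hermiticity of $H$ forces $H_R$ to be symmetric ($H_R^T = H_R$) and $H_I$ to be antisymmetric ($H_I^T = -H_I$). The identity to be proven is then essentially a statement about how these symmetry classes interact with the real vector $\ket{\psi}$, so I would first record the elementary facts that, for real $\ket{\psi}$, $\bra{\psi} M \ket{\psi} = \bra{\psi} M^T \ket{\psi}$ (transpose of a scalar), whence $\bra{\psi} S \ket{\psi}$ is real for symmetric $S$ and $\bra{\psi} A \ket{\psi} = 0$ for antisymmetric $A$.

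Next I would dispose of the mean energy. Directly from the observation above (or from Lemma~\ref{Mean-Energy-Expression}), $\bra{\psi} H_I \ket{\psi} = 0$, so $\bar{E}_{\psi} = \bra{\psi} H \ket{\psi} = \bra{\psi} H_R \ket{\psi}$; in particular $\bar{E}_{\psi}^{\,2} = \bra{\psi} H_R \ket{\psi}^2$ is exactly the term subtracted in $\Delta E^2(\psi,H_R)$. This reduces the claim to showing $\bra{\psi} H^2 \ket{\psi} = \bra{\psi} H_R^2 \ket{\psi} + \bra{\psi} H_I H_I^T \ket{\psi}$.

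For the second moment I would expand $H^2 = H_R^2 - H_I^2 + i\,(H_R H_I + H_I H_R)$. Taking the expectation in $\ket{\psi}$, the anti-Hermitian cross term vanishes: using $\ket{\psi}$ real together with $H_R^T = H_R$ and $H_I^T = -H_I$, one has $\bra{\psi} H_R H_I \ket{\psi} = \bra{\psi} (H_R H_I)^T \ket{\psi} = \bra{\psi} H_I^T H_R^T \ket{\psi} = -\bra{\psi} H_I H_R \ket{\psi}$, so the two cross terms cancel. Hence $\bra{\psi} H^2 \ket{\psi} = \bra{\psi} H_R^2 \ket{\psi} - \bra{\psi} H_I^2 \ket{\psi}$, and since $H_I^T = -H_I$ gives $-H_I^2 = H_I H_I^T$, subtracting the squared mean yields $\Delta E^2(\psi,H) = \Delta E^2(\psi,H_R) + \bra{\psi} H_I H_I^T \ket{\psi}$. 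I would also remark that $\bra{\psi} H_I H_I^T \ket{\psi} = \| H_I^T \ket{\psi} \|^2 \ge 0$, which is the form most convenient for imposing the variance constraint later.

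There is no genuine obstacle here beyond careful bookkeeping with transposes and with the symmetry types of $H_R$ and $H_I$; the only step that warrants a moment's attention is the cross-term cancellation, i.e. verifying that $H_R H_I + H_I H_R$ is antisymmetric and therefore has vanishing expectation on the real vector $\ket{\psi}$. Everything else is a direct expansion.
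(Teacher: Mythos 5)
Your proof is correct, and it reaches the identity by a cleaner route than the paper does. The paper's Supplemental Material S2 proceeds by brute force: it expands $\bra{\psi}H^2\ket{\psi}$ as an explicit double sum over the bipartite indices $H_{il,in}H_{lj,nj}$, splits each summand into real and imaginary contributions, and only afterwards recognizes the two resulting sums as $\bra{\psi}H_R H_R^T\ket{\psi}$ and $\bra{\psi}H_I H_I^T\ket{\psi}$. You instead observe at the outset that in the Schmidt basis $\ket{\psi}$ is a real vector while Hermiticity forces $H_R^T=H_R$ and $H_I^T=-H_I$, so the cross term $i(H_RH_I+H_IH_R)$ is antisymmetric and has vanishing expectation, the mean energy reduces to $\bra{\psi}H_R\ket{\psi}$, and $-H_I^2=H_IH_I^T$. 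This buys you a coordinate-free argument that never touches the bipartite index structure and makes the positivity $\bra{\psi}H_IH_I^T\ket{\psi}=\|H_I^T\ket{\psi}\|^2\ge 0$ (which the paper needs later for the boundedness argument) immediate; the paper's component computation, by contrast, displays exactly which matrix elements of $H$ enter each term, which is what it reuses in the ancilla section. The only step worth flagging is the one you already flagged: the cross-term cancellation, which you verify correctly via $\bra{\psi}M\ket{\psi}=\bra{\psi}M^T\ket{\psi}$ for real $\ket{\psi}$. No gap.
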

\begin{proof}
    See Supplemental Material.
\end{proof}
The decomposition in  \ref{Variance-2Part} shows that both the real and imaginary components of the Hamiltonian in the Schmidt basis contribute independently to the energy variance. Futhermore, if the variance $\Delta E^2(\psi,H)$ is bounded then each of the terms $\Delta E^2(\psi,H_{R})$ and $\bra{\psi}H_{I}H_{I}^{T}\ket{\psi}$ must be bounded too, as both the terms are positive.

 We now discuss the \emph{boundedness} of the entanglement generation rate while the Hamiltonian has bounded variance. In the next proposition, we will see that this is not the case as when $H$ has bounded mean.

\begin{prop}
The entanglement generation rate $\Gamma(\psi,H)$ is a bounded function over the states  $\ket{\psi}$ and the Hamiltonians with bounded variance.
\end{prop}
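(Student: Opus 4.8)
The plan is to combine the explicit formula \eqref{Simple-Expression} for $\Gamma(\psi,H)$ with the variance decomposition \eqref{Variance-2Part}, treating the bound as an optimization over the imaginary part $H_I$ in the Schmidt basis. Since $\Gamma(\psi,H) = 4\sum_{i>j} C_i C_j \log(C_i/C_j)\, H_{I,ji}$ depends only on the sub-diagonal entries $H_{I,ji}$ of the antisymmetric matrix $H_I$ (on the support of $\ket{\psi}$), and since $\bra{\psi}H_I H_I^T\ket{\psi}$ is the single relevant piece of the variance that involves $H_I$, it suffices to show that $\Gamma(\psi,H)$ is bounded whenever $\bra{\psi}H_I H_I^T\ket{\psi} \le V$ for some fixed $V$. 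The real part $H_R$ plays no role: we may take $H_R = 0$ without loss, which only relaxes the variance constraint.

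\medskip

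The key steps I would carry out are as follows. First, restrict attention to the support of $\ket{\psi}$, so that $r = \min(d_A,d_B)$ and $C_1,\dots,C_r > 0$; write $x_{ji} = H_{I,ji}$ for $i > j$ as the free real parameters. Second, express the constraint explicitly: since $H_I$ is real antisymmetric (because $H$ is Hermitian), $H_I^T = -H_I$, so $\bra{\psi}H_I H_I^T\ket{\psi} = -\bra{\psi}H_I^2\ket{\psi} = \|H_I\ket{\psi}\|^2 = \sum_k \big(\sum_l H_{I,kl} C_l\big)^2$, a positive semidefinite quadratic form in the $x_{ji}$'s that dominates a positive multiple of $\sum_{i>j} x_{ji}^2$ on the support (one would check the relevant submatrix is positive definite, or simply bound $\|H_I\ket{\psi}\|^2$ from below by extracting the diagonal blocks). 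Third, apply Cauchy--Schwarz: $|\Gamma(\psi,H)| \le 4\big(\sum_{i>j}(C_iC_j\log(C_i/C_j))^2\big)^{1/2}\big(\sum_{i>j}x_{ji}^2\big)^{1/2}$, and bound the second factor by a constant times $\sqrt{\bra{\psi}H_I H_I^T\ket{\psi}} \le \sqrt{V}$. The first factor is finite since each $C_i\log C_i$ is bounded for $C_i \in (0,1]$, giving the desired uniform bound over all $\ket{\psi}$ and all $H$ with $\Delta E^2(\psi,H) \le V$.

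\medskip

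The main obstacle I anticipate is the step relating $\sum_{i>j} x_{ji}^2$ to $\bra{\psi}H_I H_I^T\ket{\psi} = \|H_I\ket{\psi}\|^2$ with a constant that is uniform in $\ket{\psi}$: a priori, if some Schmidt coefficient $C_l$ is very small, the contribution of $x_{jl}$ to $\|H_I\ket{\psi}\|^2$ is suppressed, so one cannot control every $x_{ji}$ by the variance alone. However, this is exactly compensated in $\Gamma$: the coefficient $C_i C_j\log(C_i/C_j)$ multiplying $x_{ji}$ also vanishes as $C_j \to 0$. The clean way to handle this is not to pass through $\sum x_{ji}^2$ at all, but to bound $\Gamma$ directly by Cauchy--Schwarz in a weighted inner product: write $\Gamma(\psi,H) = 4\sum_{i>j} \big(C_j\log(C_i/C_j)\big)\,\big(C_i x_{ji}\big)$ and observe that $\sum_{i>j}(C_i x_{ji})^2 \le \sum_{i}\big(\sum_l H_{I,il}C_l\big)^2 = \|H_I\ket{\psi}\|^2$ up to a combinatorial factor (the row sums mix different $x$'s, so more care is needed — one may instead group by the row index $i$ and use that $\big(\sum_{l} H_{I,il} C_l\big)^2$ controls $\sum_{j<i} C_j^2 x_{ji}^2$ only after symmetrization, which is the genuinely delicate bookkeeping). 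Once a uniform constant $\kappa$ with $\sum_{i>j}(C_i C_j x_{ji})^2 \le \kappa\,\|H_I\ket{\psi}\|^2$ is established, the bound $|\Gamma(\psi,H)| \le 4\sqrt{\kappa V}\,\big(\max_{t\in(0,1]}|\log t|\big)$-type estimate follows, since $|\log(C_i/C_j)| = |\log C_i - \log C_j|$ is the only remaining unbounded quantity and it is absorbed by the surprisal-variance structure already visible in the abstract. I would present the argument in this direct weighted form to sidestep the uniformity pitfall entirely.
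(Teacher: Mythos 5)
Your overall strategy (explicit formula for $\Gamma$, variance decomposition, discard $H_R$, Cauchy--Schwarz against $\bra{\psi}H_I H_I^T\ket{\psi}$) matches the paper's, but the step you defer as ``genuinely delicate bookkeeping'' is not a bookkeeping issue: the inequality you need, $\sum_{i>j}(C_iC_j x_{ji})^2\le\kappa\,\|H_I\ket{\psi}\|^2$ with a uniform constant $\kappa$, is false. The reason is exactly the cancellation you flagged: writing $k_i=\sum_l H_{I,il}C_l$, one has $\|H_I\ket{\psi}\|^2=\sum_i k_i^2$, and the linear map $H_I\mapsto H_I\ket{\psi}$ from antisymmetric matrices to vectors has a kernel of dimension $(d-1)(d-2)/2>0$ for $d\ge 3$ (its image lies in the hyperplane orthogonal to $(C_1,\dots,C_d)$, since $\sum_i C_i k_i=C^T H_I C=0$). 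Concretely, for $d=3$ take $H_{I,12}=a$, $H_{I,13}=-aC_2/C_3$, $H_{I,23}=aC_1/C_3$: then $k_1=k_2=k_3=0$ for every $a$, so the left-hand side of your inequality grows like $a^2$ while the right-hand side vanishes. No symmetrization or reweighting by powers of the $C_i$ can recover control over the individual entries $x_{ji}$ from the variance.

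The proposition nevertheless holds because $\Gamma$ itself also vanishes on that kernel: it depends on $H_I$ only through the vector $k=H_I\ket{\psi}$, never through the individual matrix elements. This is the paper's route: using $H_{I,ij}=-H_{I,ji}$ one resums the formula as $\Gamma(\psi,H)=-4\sum_{i,j}C_iC_jH_{I,ij}\log C_i=-4\sum_i k_i\,C_i\log C_i$, and only then applies Cauchy--Schwarz over the single index $i$, giving $|\Gamma|\le 4\bigl(\sum_i k_i^2\bigr)^{1/2}\bigl(\sum_i C_i^2\log^2 C_i\bigr)^{1/2}\le 4\sqrt{b d}/e$, since $\sum_i k_i^2=\bra{\psi}H_I H_I^T\ket{\psi}\le\Delta E^2(\psi,H)\le b$ by the variance decomposition and $|t\log t|\le 1/e$ on $(0,1]$. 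Your argument becomes correct if you replace the pairwise Cauchy--Schwarz by this row-resummed version; as written, it has a genuine gap at its central step.
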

\begin{proof}
    We define:
    \begin{align*}
        \ket{\psi'}=\sum_{i=1}^{d}k_{i}\ket{ii}:=H_{I}\ket{\psi}
    \end{align*}
   where $d=\min (d_{A},d_{B})$. By direct calculation of $H_{I}\ket{\psi}$ we obtain:
    \begin{align*}
        k_{i}=\sum_{j=1}^{d}H_{I,ij}C_{j} \in R.
    \end{align*}
    From Eq. \ref{Simple-Expression} we have:
    \begin{align}
        \Gamma(\psi,H)=-4\sum_{i,j}C_{i}C_{j}H_{I,ij}logC_{i}\notag\\
        =-4\sum_{i}\bigg( \sum_{j}H_{I,ij}C_{j}\bigg) C_{i}logC_{i}
        =-4\sum_{i}k_{i}C_{i}logC_{i} 
    \label{EntGenRate-Another-Expression}
    \end{align}
    The functions $C_{i}\text{log}(C_{i})$ are bounded and  $\sum_{i}k_{i}^2= \bra{\psi}H_{I}H_{I}^{T}\ket{\psi}$ which is bounded by \ref{Variance-2Part}. Thus $\Gamma(\psi,H)$ must be bounded when the variance $\Delta E^2(\psi,H) \leq b $ for some $b>0$.
\end{proof}

Having established that $\Gamma(\psi,H)$ cannot diverge over the Hamiltonians with bounded variance, we now seek to characterize the Hamiltonians that maximize it. The following theorem provides this characterization.  

\begin{thm}
For any bipartite state $\ket{\psi}$, the Hamiltonian that maximizes the entanglement generation rate $\Gamma(\psi,H)$ satisfies the following set of equations:
\begin{align}
\sum_{i=1}^{d} C_i k_i &= 0, \notag\\
\sum_{i=1}^{d} k_i^2 &= 1, \notag\\
C_i \log(C_i) - 2\lambda_1 k_i - \lambda_2 C_i &= 0,
\label{Maximize-EqSet}
\end{align}
where $\lambda_1$ and $\lambda_2$ are Lagrange multipliers enforcing the respective constraints and $\sum_{i=1}^{d}k_{i}\ket{ii}:=H_{I}\ket{\psi}.$
\end{thm}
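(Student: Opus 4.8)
The plan is to convert the maximization over Hamiltonians into a finite-dimensional constrained optimization over a single real vector and then apply Lagrange multipliers. The starting point is Eq.~\eqref{EntGenRate-Another-Expression}, which shows that $\Gamma(\psi,H)=-4\sum_i k_i C_i\log C_i$ depends on $H$ only through the real vector $\vec k=(k_1,\dots,k_d)$, whose entries $k_i=\sum_j H_{I,ij}C_j$ are the coefficients of $\ket{ii}$ in $H_I\ket{\psi}$. So the theorem amounts to (i) identifying which vectors $\vec k$ are admissible and with what cost in variance, and then (ii) maximizing the linear functional $\vec k\mapsto-4\sum_i k_iC_i\log C_i$ over the resulting feasible set.

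For step (i): fix a variance budget $\Delta E^2(\psi,H)\le b$. By the additive decomposition \eqref{Variance-2Part} the real-part contribution $\Delta E^2(\psi,H_R)\ge0$ enters independently, and since $\Gamma$ is insensitive to $H_R$ we may take $H_R$ to be a multiple of the identity, devoting the whole budget to $\bra{\psi}H_IH_I^T\ket{\psi}$. Since $H_I$ is real antisymmetric, $\bra{\psi}H_IH_I^T\ket{\psi}=\|H_I\ket{\psi}\|^2$, which equals $\sum_i k_i^2$ plus the squared norm of the part of $H_I\ket{\psi}$ orthogonal to $\mathrm{span}\{\ket{ii}\}$; that orthogonal part wastes variance without affecting $\Gamma$, so at the optimum it vanishes and $\sum_i k_i^2=b$, which we rescale to $1$. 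Antisymmetry of $H_I$ also forces $\sum_i C_i k_i=\sum_{ij}C_iC_j H_{I,ij}=0$, and conversely every real $\vec k$ with $\vec k\cdot\vec C=0$ is attainable: take the rank-two antisymmetric $H_{I,ij}=k_iC_j-C_ik_j$ with all remaining matrix elements of $H_I$ set to zero; using $\|\vec C\|=1$ and $\vec k\cdot\vec C=0$ one checks that $H_I\ket{\psi}=\sum_i k_i\ket{ii}$ exactly. Hence maximizing $\Gamma(\psi,H)$ over variance-bounded Hamiltonians is equivalent to
\begin{equation*}
\max\Big\{\,-4\sum_i k_i C_i\log C_i \ :\ \sum_i C_i k_i=0,\ \sum_i k_i^2=1\,\Big\}.
\end{equation*}

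For step (ii): the objective is smooth, and the feasible set — the intersection of the unit sphere with the hyperplane $\vec k\cdot\vec C=0$ — is compact (and nonempty for $d\ge2$), so a maximizer exists; moreover the gradient of $\|\vec k\|^2$ never vanishes on the sphere, so the two constraint gradients are linearly independent and the Lagrange multiplier rule applies. Writing a Lagrangian with multiplier $\lambda_1$ for the normalization $\sum_i k_i^2=1$ and $\lambda_2$ for the orthogonality $\sum_i C_i k_i=0$ and setting its $\vec k$-gradient to zero gives, for every $i$, the stationarity equation $C_i\log C_i-2\lambda_1 k_i-\lambda_2 C_i=0$ (overall numerical factors absorbed into the multipliers), which together with the two constraints is precisely \eqref{Maximize-EqSet}.

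The main obstacle is the reduction in step (i), not the calculus in step (ii). One must argue carefully that discarding $H_R$ is legitimate (it is, because the variance couples $H_R$ and $H_I$ only additively), that the component of $H_I\ket{\psi}$ outside $\mathrm{span}\{\ket{ii}\}$ may be removed without disturbing $\vec k$ (this uses the observation that $k_i$ only involves the matrix elements $\bra{ii}H_I\ket{jj}$), and that the linear map $H_I\mapsto\vec k$ surjects onto $\{\vec k:\vec k\cdot\vec C=0\}$ (handled by the explicit rank-two construction). Once the equivalence with the displayed vector problem is established, the rest is a one-line stationarity computation. As a side remark, solving the stationarity relation for $k_i$ and imposing $\vec k\cdot\vec C=0$ fixes $\lambda_2=\sum_i C_i^2\log C_i$, i.e.\ the negative mean surprisal of the Schmidt weights, which is how this characterization ties into the surprisal-variance formula obtained later — but that is not needed here.
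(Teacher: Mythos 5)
Your proof is correct and follows essentially the same route as the paper: reduce the maximization to the finite-dimensional problem over the vector $\vec k$ subject to $\sum_i C_i k_i=0$ and $\sum_i k_i^2=1$, then apply Lagrange multipliers. If anything you are more careful than the paper's Supplemental Material, which establishes attainability of $\vec k$ by a dimension count on the linear system rather than your explicit rank-two construction, and which does not explicitly address discarding $H_R$ or the component of $H_I\ket{\psi}$ outside $\mathrm{span}\{\ket{ii}\}$ (both of which your argument handles).
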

\begin{proof}
    See Supplemental Material.
\end{proof}

In the next two parts, we will study the structure of the optimal states and Hamiltonians in the cases with and without ancilla.

\subsection{Maximal Entanglement Generation Rate without Ancilla}

Let solve the equations in \ref{Maximize-EqSet} and find the maximizing states and Hamiltonians for the optimal rate. Substituting the third equation in the first one, we obtain an expression for $\lambda_2$ in terms of the coefficients $C_{i}$:
\begin{align*}
    \lambda_{2}=\sum_{j}C_{j}^2\log C_{j}.
\end{align*}
Similarly, using the constraint $\sum_{i}k_{i}^2=1$ and the relation attained for $\lambda_2$, we have:
\begin{align*}
    \lambda_1=\frac{1}{2}\sqrt{\sum_{i}C_{i}^2\big(\log C_{i}-(\sum_{k}C_{k}^2\log C_{k})\big)^2},
\end{align*}
At this stage, by substituting the previously derived expressions (for $\lambda_1$ and $\lambda_2$) into the third equation, we can explicitly solve for the coefficients $k_i$:
\begin{align*}
    k_{i}=\frac{C_{i}(\log C_{i}-\sum_{k}C_{k}^2\log C_{k})}{\sqrt{\sum_{i}C_{i}^2(\log C_{i}-(\sum_{k}C_{k}^2\log C_{k}))^2}}.
\end{align*}
 Finally, inserting these expressions for $k_{i}$ into Eq.~\ref{EntGenRate-Another-Expression} allows us to write $\max_H \Gamma(\psi,H)$ as a closed-form function of the Schmidt coefficients $C_i$:
\begin{align*}
    \max_H \Gamma(\psi,H)=4\frac{(\sum_{i}^2\log^2C_{i})-(\sum_{k}C_{k}^2\log C_{k})^2}{\sqrt{\sum_{i}C_{i}^2(\log C_{i}-(\sum_{k}C_{k}^2\log C_{k}))^2}}.
\end{align*}
Defining $p_{i}=C_{i}^2$, we can write:
\begin{align}
    \max_H \Gamma(\psi,H)=2\frac{(\sum_{i}p_{i}^2\log^2p_{i})-(\sum_{k}p_{k}^2\log p_{k})^2}{\sqrt{\sum_{i}p_{k}^2(\log p_{k}-(\sum_{k}p_{k}^2\log p_{k}))^2}}.
\end{align}
We further note that 
\begin{align*}
    (\sum_{i}p_{i}^2\log^2p_{i})-(\sum_{k}p_{k}^2\log p_{k})^2\\
    =\sum_{i}p_{k}^2(\log p_{k}-(\sum_{k}p_{k}^2\log p_{k}))^2
\end{align*}
and both are the variance of $\log p_{i}$ with respect to the probability distribution $\{p_{k}\}_{k=1}^{d}$. Thus, equivalently we have 
\begin{align}
    \max_{H} \Gamma(\psi,H)=2\sqrt{f(\Vec{p})}
\end{align}
where $f(\Vec{p}):=(\sum_{i}p_{i}^2\log^2p_{i})-(\sum_{k}p_{k}^2\log p_{k})^2$ which is called the surprisal of the probability distribution $\{p_{k}\}_{k=1}^{d}$. An optimal initial state for enatanglment generation can be given as follows \cite{reeb2015tight}:
\begin{align}
\ket{\psi^{opt}}_{AB} & =\sqrt{\gamma}\ket{00}+\sqrt{\frac{1-\gamma}{d-1}}\sum_{i=1}^{d_A-1}\ket{ii},
\end{align}
where $\gamma\in(0,1)$ is chosen such that the probability distribution
$(\gamma,\frac{1-\gamma}{d-1},\dots,\frac{1-\gamma}{d-1})$ maximizes
the variance of the surprisal. Moreover the optimal Hamiltonian is 
\begin{equation}
H=i\left(\ket{00}\!\bra{\phi}-\ket{\phi}\!\bra{00}\right).
\end{equation}
in which $\ket{\phi}=\sum_{i=1}^{d_{A}-1}\ket{ii}/\sqrt{d-1}$. Therefore:
\begin{align}
    \max\Gamma(\psi,H)=\max_{\gamma}2\sqrt{\gamma(1-\gamma)}\log (\frac{\gamma(d-1)}{1-\gamma}).
\end{align}

\subsection{The Case with Ancilla}

The aim of this setup is for Alice and Bob to generate entanglement using additional local ancillary systems. Specifically, Alice has access to an ancillary system described by a Hilbert space of dimension $d_{A'}$, while Bob possesses a similar ancillary system with Hilbert space's dimension of $d_{B'}$. Importantly, the dynamics they are allowed to implement are constrained in two significant ways. First, the Hamiltonian responsible for the evolution must have a fixed energy variance, namely $\Delta H_\psi = 1$. Second, the form of the Hamiltonians they can use is restricted: they must be of the structure $\mathcal{I}_{A'} \otimes H_{AB} \otimes \mathcal{I}_{B'}$. This means the Hamiltonian acts non-trivially only on the main system $AB$.

Under this framework, the joint quantum state $\ket{\psi}$ shared between Alice and Bob, including their ancillas, can be represented in a Schmidt decomposition as follows:
\begin{align}
    \ket{\psi} = \sum_{\alpha,\beta = 1}^{\min(d_{A'A}, d_{BB'})} C_{\alpha,\beta} \ket{\alpha\beta}_{AA'} \ket{\alpha\beta}_{BB'},
\end{align}
where $d_{XX'} = d_X \cdot d_{X'}$ denotes the total dimension of subsystem $X$ combined with its corresponding ancilla. It is worth noting that the basis vectors $\ket{\alpha\beta}$ generally represent entangled states across the subsystems and $\alpha\beta$ are only labels here: $\alpha\in \{1,...,d_{A'}\}$ and $\beta\in \{1,...,d_{A}\}$.

In the Schmidt basis, the Hamiltonian governing the evolution of the system can be written in the general form:
\begin{align}
    H = \sum H_{(ij)(kl),(mn)(ab)} \ketbra{ij}{kl}_{AA'} \otimes \ketbra{mn}{ab}_{BB'}. \label{Hamiltonian-Ancilla}
\end{align}
However, we must have in mind that the physically allowed dynamics are restricted to Hamiltonians that act only non-trivially on the $AB$ subsystem.

To proceed toward our primary objective, we begin by introducing several technical lemmas and propositions. These results will provide the foundational tools necessary for analyzing and optimizing the entanglement generation process under the stated constraints in this section.

\begin{prop}
The function $\Gamma(\psi,H)$ is local untitarily invariant i.e. it is invariant under the following transformations
\begin{align}
    H \to (U_{A}\otimes U_{B}) H (U^{\dagger}_{A}\otimes U^{\dagger}_{B})=H^{'}\notag\\
    \ket{\psi} \to U_{A}\otimes U_{B}\ket{\psi}=\ket{\psi^{'}}
\end{align}
for any unitary matrix $U_{A}\otimes U_{B}$.\label{Local-Unitary-Invariance}
\end{prop}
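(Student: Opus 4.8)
The plan is to prove local unitary invariance of $\Gamma(\psi,H)$ directly from its definition as an instantaneous rate, rather than from the Schmidt-basis formula of Eq.~\eqref{Simple-Expression}, since the formula already presupposes a particular Schmidt decomposition and the claim is precisely that the quantity does not depend on that choice. First I would recall that $\Gamma(\psi,H) = \left.\frac{d}{dt} E\!\left(e^{-iHt}\ketbra{\psi}{\psi}e^{iHt}\right)\right|_{t=0}$, where $E$ is the relative entropy of entanglement, $E(\rho) = \min_{\sigma \in \mathrm{SEP}} S(\rho\|\sigma)$. The two key facts I would invoke are: (i) for $W = U_A \otimes U_B$, conjugation by $W$ commutes with the unitary flow, i.e. $W\, e^{-iHt}\ketbra{\psi}{\psi}e^{iHt}\, W^\dagger = e^{-iH't}\, W\ketbra{\psi}{\psi}W^\dagger\, e^{iH't} = e^{-iH't}\ketbra{\psi'}{\psi'}e^{iH't}$ (using $e^{-iH't} = W e^{-iHt} W^\dagger$, which follows from $H' = WHW^\dagger$ and the power-series definition of the exponential); and (ii) the relative entropy of entanglement is invariant under local unitaries, $E(W\rho W^\dagger) = E(\rho)$, because $S(\cdot\|\cdot)$ is invariant under any unitary conjugation and the set $\mathrm{SEP}$ of separable states is mapped onto itself by $W = U_A\otimes U_B$, so the minimization over $\sigma$ is unchanged.

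Putting these together, for each $t$ we have $E\!\left(e^{-iH't}\ketbra{\psi'}{\psi'}e^{iH't}\right) = E\!\left(W e^{-iHt}\ketbra{\psi}{\psi}e^{iHt} W^\dagger\right) = E\!\left(e^{-iHt}\ketbra{\psi}{\psi}e^{iHt}\right)$. The two curves $t \mapsto E(\rho_t')$ and $t \mapsto E(\rho_t)$ therefore coincide identically as functions of $t$, hence so do their derivatives at $t=0$, giving $\Gamma(\psi',H') = \Gamma(\psi,H)$. I would present this as a short chain of equalities.

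The only subtlety — and the step I would flag as the main thing to get right rather than a genuine obstacle — is the differentiability question: $\Gamma$ is defined via a derivative, and since $E$ arises from a minimization over $\sigma$, a priori $t\mapsto E(\rho_t)$ need be only one-sided differentiable or differentiable almost everywhere. This is harmless here: we do not need to know the derivative exists in any strong sense, only that \emph{whatever} the expression $\left.\frac{d}{dt}E(\rho_t)\right|_{t=0}$ means (one-sided limit, Dini derivative, etc.), it is computed from the same scalar function of $t$ in both cases, so the two expressions are literally the same number. I would add one sentence to this effect so the argument is airtight regardless of the regularity of $E$ along the orbit. No nontrivial calculation is required; the proof is essentially the observation that conjugation by a product unitary is a symmetry of both the dynamics and the entanglement measure.
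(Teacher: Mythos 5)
Your proof is correct, but it takes a genuinely different route from the paper's. The paper works with the explicit intermediate expression $\Gamma(\psi,H) = -i\,\tr\big[\tr_B\big(H\ketbra{\psi}{\psi}-\ketbra{\psi}{\psi}H\big)\log\tr_B(\ketbra{\psi}{\psi})\big]$ and verifies invariance by direct algebra: the covariance of the partial trace, $\tr_B\big((U_A\otimes U_B)X(U_A^\dagger\otimes U_B^\dagger)\big)=U_A\tr_B(X)U_A^\dagger$, the identity $\log(U\rho U^\dagger)=U(\log\rho)U^\dagger$, and cyclicity of the trace. You instead argue at the level of the definition, using that conjugation by $W=U_A\otimes U_B$ intertwines the two unitary flows ($We^{-iHt}W^\dagger=e^{-iH't}$) and that the relative entropy of entanglement is itself LU-invariant, so the two curves $t\mapsto E(\rho_t)$ and $t\mapsto E(\rho_t')$ coincide identically and hence have the same derivative at $t=0$. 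Your argument is more general --- it applies verbatim to any LU-invariant entanglement measure and does not depend on the particular trace formula or on pure inputs --- and your remark about differentiability (that identical curves have identical derivatives in whatever sense the derivative is taken) is a genuine point of care that the paper's computation sidesteps by already having a closed-form expression in hand. The paper's computation, on the other hand, is self-contained given Proposition~1 and never has to invoke the minimization over separable states. Both proofs are valid; yours trades a short explicit calculation for a structural symmetry argument.
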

\begin{proof}
    We have:
    \begin{align*}
        \Gamma(\psi^{'},H^{'})=-i\tr[\tr_{B}\Big(U_{A}\otimes U_{B} H U^{\dagger}_{A}\otimes U^{\dagger}_{B}U_{A}\otimes U_{B}\ket{\psi}\bra{\psi}U^{\dagger}_{A}\otimes U^{\dagger}_{B} \notag \\
        -U_{A}\otimes U_{B}\ket{\psi}\bra{\psi}U^{\dagger}_{A}\otimes U^{\dagger}_{B}U_{A}\otimes U_{B} H U^{\dagger}_{A}\otimes U^{\dagger}_{B}\Big)\notag \\
        \times \log\tr_{B}\Big(U_{A}\otimes U_{B}\ket{\psi}\bra{\psi}U^{\dagger}_{A}\otimes U^{\dagger}_{B}\Big)].
    \end{align*}
    Taking into account that $U^{\dagger}_{A}\otimes U^{\dagger}_{B}U_{A}\otimes U_{B}=\mathcal{I}$ we may simplify the above expression as follows:
    \begin{align*}
        \Gamma(\psi^{'},H^{'})=-i\tr[U_{A}\tr_{B}\Big(U_{B} H \ket{\psi}\bra{\psi} U^{\dagger}_{B} \notag \\
        - U_{B}\ket{\psi}\bra{\psi} H U^{\dagger}_{B}\Big)U^{\dagger}_{A}\notag \\
        \times U_{A} \log\tr_{B}\Big (U_{B}\ket{\psi}\bra{\psi} U^{\dagger}_{B}\Big)U^{\dagger}_{A}]. 
    \end{align*}
    In the last line, we used the fact that for a density matrix $\rho$ and any unitary $U$, we have $\log U\rho U^{\dagger}=U(\log\rho) U^{\dagger}$. We can further simplify the expression for $\Gamma(\psi',H')$ by using the cyclic property of the trace functional:
    \begin{align*}
        \Gamma(\psi^{'},H^{'})=-i\tr[U_{A}^{\dagger}U_{A}\tr_{B}\Big(U^{\dagger}_{B}U_{B} H \ket{\psi}\bra{\psi} U \notag \\
        - U^{\dagger}_{B}U_{B}\ket{\psi}\bra{\psi} H \Big)\notag \\
        \times  \log\tr_{B}\Big (U^{\dagger}_{B}U_{B}\ket{\psi}\bra{\psi} \Big)]=\Gamma(\psi,H). 
    \end{align*}
    Therefore, $\Gamma(\psi',H')=\Gamma(\psi,H)$ and the proof is complete.
\end{proof}

We also note that our constraint $\braket{\psi|H^2|\psi}-(\braket{\psi|H|\psi})^2=1$ still holds under the same transformation in the proposition \ref{Local-Unitary-Invariance}. Using the previous proposition, we can find a slightly different but equivalent characterization of the maximization problem, as stated in the following lemma, which would be useful for analytical purposes.

\begin{lemma}
\label{Equiv-Charac}
The following two optimization problems are equivalent:
\begin{align*}
    p_1 &= \max_{U}\max_{H}\, \Gamma(U\ket{\psi},H) \quad \text{s.t.} \notag \\
    &\quad H \text{ has the structure } \mathcal{I}_{A'}\otimes H_{AB}\otimes \mathcal{I}_{B'}, \notag \\
    &\quad \Delta H_{U\ket{\psi}} = 1, \notag \\
    &\quad U \text{ is a local unitary.}
\end{align*}
and
\begin{align*}
    p_2 &= \max_{U}\max_{H}\, \Gamma(U\ket{\psi},H) \quad \text{s.t.} \notag \\
    &\quad H \text{ is unitarily equivalent to } \mathcal{I}_{A'}\otimes H_{AB}\otimes \mathcal{I}_{B'}, \notag \\
    &\quad \Delta H_{U\ket{\psi}} = 1, \notag \\
    &\quad U \text{ is a local unitary.}
\end{align*}
\end{lemma}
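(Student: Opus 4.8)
The plan is to prove the two inequalities $p_1\le p_2$ and $p_2\le p_1$ separately. The inequality $p_1\le p_2$ is a trivial inclusion of feasible sets: a Hamiltonian that literally has the form $\mathcal{I}_{A'}\otimes H_{AB}\otimes\mathcal{I}_{B'}$ is (via the identity) unitarily equivalent to one of that form, so every pair $(U,H)$ feasible for the first problem is feasible for the second and yields the same objective $\Gamma(U\ket{\psi},H)$; taking the maximum gives $p_1\le p_2$.

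For the reverse inequality I would take an arbitrary pair $(U,H)$ feasible for $p_2$ and write $H=W H_0 W^{\dagger}$, where $H_0=\mathcal{I}_{A'}\otimes H_{AB}\otimes\mathcal{I}_{B'}$ and $W=W_{AA'}\otimes W_{BB'}$ is the local unitary implementing the equivalence. Applying Proposition~\ref{Local-Unitary-Invariance} with the local unitary $W^{\dagger}$ --- which sends $H\mapsto W^{\dagger}HW=H_0$ and $\ket{\psi}\mapsto W^{\dagger}\ket{\psi}$ --- gives $\Gamma(U\ket{\psi},H)=\Gamma\big(W^{\dagger}U\ket{\psi},H_0\big)$, since $W^{\dagger}U=(W_{AA'}^{\dagger}U_{AA'})\otimes(W_{BB'}^{\dagger}U_{BB'})$ is again of product form, hence local. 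The invariance of the energy variance recorded just after Proposition~\ref{Local-Unitary-Invariance} simultaneously turns the constraint $\Delta H_{U\ket{\psi}}=1$ into $\Delta (H_0)_{W^{\dagger}U\ket{\psi}}=1$. Hence $(W^{\dagger}U,H_0)$ is feasible for $p_1$ and attains the same value of the objective, so $p_2\le p_1$. Combining the two inequalities gives $p_1=p_2$.

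I expect the only point that needs care is the meaning of \emph{unitarily equivalent} in the definition of $p_2$: the conjugating unitary must be local with respect to the $AA'|BB'$ partition, because that is precisely the hypothesis under which Proposition~\ref{Local-Unitary-Invariance} and the accompanying variance invariance apply. Granting this, there is no genuine optimization obstacle --- the argument is essentially bookkeeping --- and the content of the lemma is simply that the additional freedom of conjugating $H$ by a local unitary in $p_2$ is already subsumed by the local-unitary freedom acting on the state $\ket{\psi}$ that is present in $p_1$.
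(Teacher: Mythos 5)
Your proof is correct and follows essentially the same route as the paper: the trivial inclusion for $p_1\le p_2$, and for the converse an application of Proposition~\ref{Local-Unitary-Invariance} (together with the variance invariance) to the local unitary that maps $H$ back to the product form, absorbing it into the state's local-unitary freedom. Your closing remark that ``unitarily equivalent'' must be read as \emph{local} unitary equivalence is exactly the reading the paper intends (cf.\ Proposition~\ref{Hamiltonian-Structure}).
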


\begin{proof}
We first show that \( p_1 \leq p_2 \).  
Assume that the optimal values of the first problem are achieved for some \( U^* \) and 
\[
H^* = \mathcal{I}_{A'} \otimes H_{AB}^* \otimes \mathcal{I}_{B'}.
\]
Since \( H^* \) is trivially unitarily equivalent to itself, \( U^* \) is local unitary, and the variance constraint 
\( \Delta H^*_{U^*\ket{\psi}} = 1 \) is satisfied, the pair \( (U^*, H^*) \) also fulfills all constraints of the second problem.  
Hence, \( p_1 \leq p_2 \).

Conversely, let \( (U^{**}, H^{**}) \) be optimal for the second problem. Since \( H^{**} \) is unitarily equivalent to a Hamiltonian of the form in \( p_1 \), one can find a local unitary \( U' \) such that
\[
    U' H^{**} U'^\dagger = \mathcal{I}_{A'} \otimes H_{AB}^{**} \otimes \mathcal{I}_{B'}.
\]
Because the function \( \Gamma(\cdot, \cdot) \) is invariant under local unitaries (see Proposition~\ref{Local-Unitary-Invariance}), we have
\[
    \Gamma(U^{**}\ket{\psi}, H^{**}) 
    = \Gamma(U'U^{**}\ket{\psi}, \mathcal{I}_{A'} \otimes H_{AB}^{**} \otimes \mathcal{I}_{B'}).
\]
The pair \( (U'U^{**},\, \mathcal{I}_{A'} \otimes H_{AB}^{**} \otimes \mathcal{I}_{B'}) \) satisfies the constraints of the first problem, including 
\( \Delta H^{**}_{AB,U'U^{**}\ket{\psi}} = 1 \).  
Therefore, \( p_1 \geq p_2 \).

Combining both inequalities, we conclude that $p_1 = p_2$.

\end{proof}

The following proposition yields a characterization of the Hamiltonians which have the structure $\mathcal{I}_{A'} \otimes H_{AB} \otimes \mathcal{I}_{B'}$ up to local unitary transformations. Note that from now on we assume $dim(\mathcal{H}_A)=dim(\mathcal{H}_B)$ and without loss of generality we consider the case with $d_{AA'}\leq d_{BB'}$.

\begin{prop}
\label{Hamiltonian-Structure}
 If the Hamiltonian \( H \) acting on \( \mathcal{H}_{A'} \otimes \mathcal{H}_A \otimes \mathcal{H}_B \otimes \mathcal{H}_{B'} \), with $dim(\mathcal{H}_A)=dim(\mathcal{H}_B)$, is local unitarily equivalent to a Hamiltonian of the form
    \[
        \mathcal{I}_{A'} \otimes H_{AB} \otimes \mathcal{I}_{B'},
    \]
    then there exists a labeling of the Schmidt basis states such that the matrix elements of \( H \) satisfy
    \[
        H_{(ij)(kl),(ab)(cd)} = 
        \begin{cases}
            H_{jl,ac} & \text{if } i = k \text{ and } a= c, \\
            0 & \text{otherwise},
        \end{cases}
    \]
    for all indices \( i, k, a, c \) and $\{\ket{\alpha' \beta'}\ket{\alpha'\beta'}\}_{\alpha=1,\beta=1}^{d_{A'},d_{A}}=\{\ket{f(\alpha\beta)}\ket{g(\alpha\beta)}\}$ where $f(.)$ and $g(.)$ are the relabeling functions.
\end{prop}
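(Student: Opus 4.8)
The plan is to unfold the hypothesis into an explicit conjugation and then simply read off the matrix elements of $H$ in the basis adapted to that conjugation. Since, by assumption, $H$ is equivalent, under a unitary that is local with respect to Alice's system $AA'$ and Bob's system $BB'$, to $\mathcal{I}_{A'}\otimes H_{AB}\otimes\mathcal{I}_{B'}$, there are a unitary $U$ on $\mathcal{H}_{A'}\otimes\mathcal{H}_{A}$ and a unitary $V$ on $\mathcal{H}_{B}\otimes\mathcal{H}_{B'}$ with
\[
H=(U\otimes V)\,\bigl(\mathcal{I}_{A'}\otimes H_{AB}\otimes\mathcal{I}_{B'}\bigr)\,(U^{\dagger}\otimes V^{\dagger}),
\]
where $\mathcal{I}_{A'}\otimes H_{AB}\otimes\mathcal{I}_{B'}$ is read as an operator on $(\mathcal{H}_{A'}\otimes\mathcal{H}_{A})\otimes(\mathcal{H}_{B}\otimes\mathcal{H}_{B'})$ after the (harmless) reordering of tensor factors.

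First I would construct the adapted basis and the relabeling. Taking product bases $\{\ket{i}_{A'}\},\{\ket{j}_{A}\}$ of $\mathcal{H}_{A'},\mathcal{H}_{A}$ and $\{\ket{b}_{B}\},\{\ket{a}_{B'}\}$ of $\mathcal{H}_{B},\mathcal{H}_{B'}$, define
\[
\ket{ij}_{AA'}:=U\bigl(\ket{i}_{A'}\otimes\ket{j}_{A}\bigr),\qquad \ket{ab}_{BB'}:=V\bigl(\ket{b}_{B}\otimes\ket{a}_{B'}\bigr).
\]
Unitarity of $U,V$ makes these orthonormal bases of $\mathcal{H}_{AA'}$ and $\mathcal{H}_{BB'}$, labelled by pairs with $i\in\{1,\dots,d_{A'}\}$, $j\in\{1,\dots,d_A\}$ and analogously on Bob's side; as stressed in the text, these vectors are generically entangled across the $A'\,|\,A$ and $B\,|\,B'$ cuts, which is immaterial since only orthonormality and the two-index labelling are used. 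The relabeling functions $f,g$ are then just the bijection that reindexes the Schmidt vectors onto the grid $\{(i,j)\}$: $f(\alpha\beta)$ and $g(\alpha\beta)$ are the two components of the grid label carried by the Schmidt vector originally indexed $\alpha\beta$.

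Next I would compute the matrix elements of $H$ in the tensor product of these bases. Plugging in the definitions, the conjugating unitaries cancel against the basis-defining unitaries and one is left with
\[
\bra{ij}_{AA'}\bra{ab}_{BB'}\,H\,\ket{kl}_{AA'}\ket{cd}_{BB'}
=\braket{i|k}_{A'}\,\braket{a|c}_{B'}\;\bra{j}_{A}\bra{b}_{B}\,H_{AB}\,\ket{l}_{A}\ket{d}_{B}
=\delta_{ik}\,\delta_{ac}\;\bra{j}_{A}\bra{b}_{B}\,H_{AB}\,\ket{l}_{A}\ket{d}_{B},
\]
which is exactly the asserted structure: all elements vanish unless the first ("ancillary") components of the row and column labels coincide on both sides, and in that case the value is a matrix element of $H_{AB}$ depending only on the remaining ("system") components---i.e.\ the quantity denoted $H_{jl,\cdot}$ in the statement. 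Matching the precise index placement to the notation used in the Proposition is then a routine bookkeeping step.

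Finally, to connect this adapted basis with ``the Schmidt basis'', I would note that $\Gamma(\psi,H)$ is invariant under local unitaries (Proposition~\ref{Local-Unitary-Invariance}) and, by the Schmidt-basis expression for $\Gamma$ (\ref{Simple-Expression}, applied to the $AA'\,|\,BB'$ cut), depends on $H$ only through its components in the Schmidt basis of $\ket{\psi}$; moreover any orthonormal basis of $\mathcal{H}_{AA'}$ is realized as the Alice-side Schmidt basis of some full-Schmidt-rank state (possible since $d_{AA'}\le d_{BB'}$). Hence, without loss of generality, the Schmidt basis of $\ket{\psi}$ may be taken to be $\{\ket{ij}_{AA'}\}$ (paired on Bob's side with $\{\ket{ab}_{BB'}\}$), and in that basis $H$ has the claimed form. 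I expect the genuine work to lie not in the cancellation above, which is essentially one line, but in (i) the tensor-factor bookkeeping when $\mathcal{H}_{A'}\otimes\mathcal{H}_{A}$ is identified with Alice's space and $\mathcal{H}_{B}\otimes\mathcal{H}_{B'}$ with Bob's, and (ii) arguing that the constructed---generically entangled---basis may legitimately play the role of the (relabeled) Schmidt basis in the subsequent optimization.
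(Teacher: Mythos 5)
Your first two steps---unfolding the conjugation $H=(U\otimes V)\,(\mathcal{I}_{A'}\otimes H_{AB}\otimes\mathcal{I}_{B'})\,(U^{\dagger}\otimes V^{\dagger})$, defining the adapted basis $\ket{ij}_{AA'}=U(\ket{i}_{A'}\ket{j}_A)$, $\ket{ab}_{BB'}=V(\ket{b}_B\ket{a}_{B'})$, and reading off the matrix elements $\delta_{ik}\,\delta_{ac}\,\bra{j}\bra{b}H_{AB}\ket{l}\ket{d}$---coincide with the paper's computation and are fine. The gap is in the final step, which you yourself flag as ``the genuine work'' but then dispatch with an argument that does not close it. The invariance of Proposition~\ref{Local-Unitary-Invariance} is $\Gamma(\psi,H)=\Gamma(W\psi,WHW^{\dagger})$: it rotates the state and the Hamiltonian \emph{together}, so it cannot be used to independently align the Schmidt basis of $\ket{\psi}$ with the block basis of $H$. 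If you rotate $\ket{\psi}$ so that its Schmidt basis becomes your adapted basis, $H$ is dragged along and is no longer the operator whose matrix elements you computed; if instead you replace $\ket{\psi}$ by a different state whose Schmidt basis happens to be the adapted basis, you have restricted the optimization domain rather than performed a without-loss-of-generality reduction. This tension is exactly why Lemma~\ref{Equiv-Charac} enlarges the Hamiltonian class to operators merely \emph{unitarily equivalent} to $\mathcal{I}_{A'}\otimes H_{AB}\otimes\mathcal{I}_{B'}$, and why the paper's proof cannot stop where yours does.

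Concretely, the paper takes the local unitary $U$ that sends the Schmidt vectors to the aligned product basis, $U\ket{\alpha\beta}\ket{\alpha\beta}=\ket{\alpha}_{A'}\ket{\beta}_{A}\ket{\beta}_{B}\ket{\alpha}_{B'}$, and then argues that a further \emph{controlled} local unitary $U_C$ (controlled on $A'$ and $B'$) brings $UHU^{\dagger}$ exactly back to the form $\mathcal{I}_{A'}\otimes H^{*}_{AB}\otimes\mathcal{I}_{B'}$; this is where the hypothesis $\dim(\mathcal{H}_A)=\dim(\mathcal{H}_B)$ is used and where the relabeling functions $f,g$ in the statement come from. Your proposal never invokes $\dim(\mathcal{H}_A)=\dim(\mathcal{H}_B)$, which is a symptom of the missing step. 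A second, related omission: you leave the pairing between $\{\ket{ij}_{AA'}\}$ and $\{\ket{ab}_{BB'}\}$ unspecified, but the Schmidt decomposition pairs each Alice-side vector with a definite Bob-side partner, and the structure used downstream in Theorem~\ref{Charac-With-Ancilla} requires the specific aligned pairing in which the $A'$ label matches the $B'$ label and the $A$ label matches the $B$ label. Establishing that this aligned pairing can always be arranged (for the given $\ket{\psi}$, up to the allowed local-unitary freedom) is precisely the content of the relabeling claim, and it is not automatic from ``any orthonormal basis is some state's Schmidt basis.''
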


We refer to the Supplemental Materials for the proof of the proposition. We wish to characterize our problem in terms of the coefficients $C_{\alpha\beta}$ and the elements of the Hamiltonian $H$. We have the following theorem in this regard.
\begin{thm}
\label{Charac-With-Ancilla}
    The following problem is equal to the optimal $\Gamma(\psi, H)$ with $\Delta H_{\ket{\psi}}=1$ in the case with ancilla:
\begin{align}
    \max_{H,\psi} 2\sum_{\alpha}^{d_{A'}}\sum_{ \beta, \delta}^{d_A}C_{\alpha\beta}C_{\alpha\delta}\log(\frac{C_{\alpha \beta}}{C_{\alpha \delta}}) H_{I, \delta \beta,\delta \beta} \text{  s.t  }\notag \\
   \sum_{\alpha, j}^{d_{A'},d_{A}} (\sum_{\beta}^{d_{A}}C_{\alpha,\beta}H_{I,\beta j,\beta j})^2=1 \notag \\
   \sum_{\alpha,\beta}C_{\alpha,\beta}^2 =1.
      \end{align}

    We define the following matrices:
    \begin{align}
       C:= \{C_{\alpha\beta}\}_{d_{A'}\times d_{A}} \notag \\
        G := \{H_{I,ij,ij}\}_{d_A\times d_A} \notag \\
        K:= \{C_{\alpha\beta} \log C_{\alpha\beta}\}_{d_{A'}\times d_{A}}.
        \label{matrix-def}
    \end{align}
    Then we can express our problem in terms of the matrices $C,G$ and $K$ as follows:
    \begin{align}
        \max \tr\big((K^TC-C^TK)G\big) \notag \\
        \tr(|CG|^2)=1 \notag \\
        \tr(|C|^2)=1
    \end{align}
\end{thm}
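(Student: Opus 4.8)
The plan is to chain together Lemma~\ref{Equiv-Charac}, Proposition~\ref{Hamiltonian-Structure}, Proposition~\ref{Local-Unitary-Invariance}, the formula~\ref{Simple-Expression} and the variance decomposition~\ref{Variance-2Part}, and then to recognise the resulting scalar sums as traces of matrix products. First I would use Lemma~\ref{Equiv-Charac} to replace the constraint ``$H$ has the literal structure $\mathcal{I}_{A'}\otimes H_{AB}\otimes\mathcal{I}_{B'}$'' by ``$H$ is local unitarily equivalent to such a Hamiltonian''. This lets me fix $\ket{\psi}$ in its own Schmidt basis, $\ket{\psi}=\sum_{\alpha,\beta}C_{\alpha\beta}\ket{\alpha\beta}_{AA'}\ket{\alpha\beta}_{BB'}$ (possible since $d_{AA'}\le d_{BB'}$), and then apply Proposition~\ref{Hamiltonian-Structure}: up to local unitaries and relabelling, an admissible $H$ is block diagonal in the two ancillary labels, with the \emph{same} $d_A\times d_A$ block of elements of $H_{AB}$ repeated in every ancillary block. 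Since $\Gamma$ and the variance are local unitarily invariant (Proposition~\ref{Local-Unitary-Invariance}), nothing is lost by working in this frame.

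Next I would eliminate the real part of $H$. By~\ref{Simple-Expression}, $\Gamma(\psi,H)$ depends only on the imaginary part $H_I$ of $H$ in the Schmidt basis, and in fact only on its ``diagonal block'' entries $H_{I,ij,ij}$; by~\ref{Variance-2Part} the variance splits into two nonnegative pieces $\Delta E^2(\psi,H_R)$ and $\bra{\psi}H_I H_I^T\ket{\psi}$. Hence, replacing $H$ by the Hamiltonian built from its $\Gamma$-relevant entries alone (which by Proposition~\ref{Hamiltonian-Structure} still has the required block form) leaves $\Gamma$ unchanged and does not increase the variance, and rescaling that Hamiltonian to variance $1$ only multiplies $\Gamma$ by a factor $\ge 1$. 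Therefore the optimum is attained with $H_R$ contributing no variance and $H_I$ supported only on the diagonal blocks, so the constraint $\Delta H_{\ket{\psi}}=1$ becomes $\bra{\psi}H_I H_I^T\ket{\psi}=1$. Writing $H_I\ket{\psi}=\sum_{\alpha,j}k_{\alpha j}\ket{\alpha j}\ket{\alpha j}$ as in the boundedness proof (Eq.~\ref{EntGenRate-Another-Expression}) and using the block structure gives $k_{\alpha j}=\sum_\beta C_{\alpha\beta}H_{I,\beta j,\beta j}$, whence $\bra{\psi}H_I H_I^T\ket{\psi}=\sum_{\alpha,j}\big(\sum_\beta C_{\alpha\beta}H_{I,\beta j,\beta j}\big)^2$. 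Feeding the same block diagonal $H$ into~\ref{Simple-Expression}, every term mixing two distinct ancillary blocks vanishes, and after un-restricting the ordering $i>j$ and using the anti-symmetry $H_{I,ij,ij}=-H_{I,ji,ji}$ forced by Hermiticity, the surviving terms collect into $2\sum_{\alpha}^{d_{A'}}\sum_{\beta,\delta}^{d_A}C_{\alpha\beta}C_{\alpha\delta}\log(C_{\alpha\beta}/C_{\alpha\delta})H_{I,\delta\beta,\delta\beta}$. Together with $\sum_{\alpha,\beta}C_{\alpha\beta}^2=1$ this is the first optimisation of the theorem.

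Finally I would translate into matrices. With $C=\{C_{\alpha\beta}\}$ and $K=\{C_{\alpha\beta}\log C_{\alpha\beta}\}$ of size $d_{A'}\times d_A$, and $G=\{H_{I,ij,ij}\}$ of size $d_A\times d_A$ (anti-symmetric), the normalisation is $\sum_{\alpha,\beta}C_{\alpha\beta}^2=\|C\|_F^2=\tr(|C|^2)=1$; since $\sum_\beta C_{\alpha\beta}H_{I,\beta j,\beta j}=(CG)_{\alpha j}$, the variance constraint reads $\|CG\|_F^2=\tr(|CG|^2)=1$; and expanding $\log(C_{\alpha\beta}/C_{\alpha\delta})=\log C_{\alpha\beta}-\log C_{\alpha\delta}$ one has $\sum_\alpha (C_{\alpha\beta}\log C_{\alpha\beta})\,C_{\alpha\delta}=(K^T C)_{\beta\delta}$ and $\sum_\alpha C_{\alpha\beta}\,(C_{\alpha\delta}\log C_{\alpha\delta})=(C^T K)_{\beta\delta}$, so contracting the remaining indices against $H_{I,\delta\beta,\delta\beta}=G_{\delta\beta}$ turns the double sum into $\tr\big((K^T C-C^T K)G\big)$, which is the stated matrix objective.

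The step I expect to be the main obstacle is the second paragraph: one must check carefully that stripping $H$ down to its $\Gamma$-relevant imaginary entries and rescaling keeps it inside the local unitary orbit of $\mathcal{I}_{A'}\otimes H_{AB}\otimes\mathcal{I}_{B'}$, and that the vanishing of the cross-block terms in~\ref{Simple-Expression} is precisely the content of ``$H$ acts trivially on the ancillas''. This is where Proposition~\ref{Hamiltonian-Structure} is indispensable, and it is also what forces the $\Gamma$-relevant data down to a single $d_A\times d_A$ anti-symmetric matrix $G$; the remaining matrix bookkeeping is routine once the transposes are tracked with care.
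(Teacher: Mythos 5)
Your proposal is correct and follows essentially the same route as the paper's proof: reduce via Lemma~\ref{Equiv-Charac} and Proposition~\ref{Hamiltonian-Structure} to the block-structured Hamiltonians, expand $\Gamma$ and $\bra{\psi}H_IH_I^T\ket{\psi}$ in the Schmidt basis so that only the entries $H_{I,\beta j,\beta j}$ survive in the objective, discard the remaining nonnegative variance contributions, and rewrite the sums as traces of $C$, $K$, $G$. Your explicit rescaling argument for why the extra variance terms can be set to zero is a slightly more careful phrasing of the paper's one-line positivity remark, but it is the same idea, not a different proof.
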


Now, we are ready to state and prove (in the Supplemental Material) a closed form expression for the maximum of entanglement generation rate when Alice and Bob are allowed to use ancillary systems.
\begin{thm}
\label{Max-With-Ancilla}
    The maximum of $\Gamma(\psi,H)$ with $\Delta H_{\ket{\psi}}=1$ in the case with ancilla, is equal to $2\Lambda$ where:
    \begin{align}
        \Lambda^2 = \sup \tr\big(|C^TK-K^TC|^2 (C^TC)^{-1}\big).
    \end{align}
    and the supremum is taken over all invertible matrices with $\sum_{i,j}|C_{ij}|^{2}=1$.
\end{thm}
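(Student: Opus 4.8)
The plan is to reduce the constrained maximization from Theorem~\ref{Charac-With-Ancilla} to an eigenvalue-type problem by eliminating the variance constraint $\tr(|CG|^2)=1$ through a rescaling argument, and then optimizing out the matrix $G$ explicitly. First I would note that the objective $\tr\big((K^TC - C^TK)G\big)$ is linear in $G$ while the constraint $\tr(|CG|^2) = \tr(G^TC^TCG) = 1$ is a quadratic form in $G$ that is positive definite whenever $C$ has full column rank (which justifies restricting to invertible $C^TC$, i.e. the ``invertible matrices'' in the statement, the non-invertible case being handled by a limiting argument). Writing $M := K^TC - C^TK$ (a real antisymmetric $d_A\times d_A$ matrix since $G$ and hence the relevant objects are real), the problem at fixed $C$ becomes: maximize the linear functional $\tr(MG)$ subject to $\tr(G^T (C^TC) G) = 1$.

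Next I would carry out this inner optimization. The quadratic form $\langle G_1, G_2\rangle_{C} := \tr(G_1^T (C^TC) G_2)$ is an inner product on the space of real matrices; by Cauchy--Schwarz (or a Lagrange multiplier computation, matching the structure of Theorem in Eq.~\eqref{Maximize-EqSet}), the maximum of the linear functional $\tr(MG) = \tr(M^T G)^{\!*}$-style pairing over the unit sphere of this inner product is the dual norm, giving
\begin{align}
\max_{G:\ \tr(G^T(C^TC)G)=1} \tr(MG) \;=\; \sqrt{\tr\big(M^T (C^TC)^{-1} M\big)}.
\end{align}
Here the optimal $G$ is proportional to $(C^TC)^{-1}M$ up to normalization. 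Since $M = K^TC - C^TK$ is antisymmetric, $M^T = -M = C^TK - K^TC$, so $\tr(M^T(C^TC)^{-1}M) = \tr\big((C^TK - K^TC)(C^TC)^{-1}(K^TC - C^TK)\big)$, which I would then rewrite as $\tr\big(|C^TK - K^TC|^2 (C^TC)^{-1}\big)$ using $|X|^2 = X^\dagger X$ and the symmetry of $(C^TC)^{-1}$ together with cyclicity of the trace. Taking the supremum over all admissible $C$ (equivalently all $K = \{C_{\alpha\beta}\log C_{\alpha\beta}\}$, which is determined by $C$) with $\tr(|C|^2) = 1$ then yields $\Lambda^2 = \sup \tr\big(|C^TK - K^TC|^2 (C^TC)^{-1}\big)$, and $\max\Gamma = 2\Lambda$ follows from the factor of $2$ already present in the objective of Theorem~\ref{Charac-With-Ancilla}.

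The main obstacle I anticipate is handling the domain carefully: the constraint as phrased in Theorem~\ref{Charac-With-Ancilla} involves $G$ only through the combination appearing in $CG$ and the objective, so one must check that the effective variable after eliminating $G$ is genuinely unconstrained except for the normalization on $C$, and in particular that no positivity or realizability constraint on $G$ (e.g. that it must arise as the imaginary part of a legitimate Hamiltonian block via Proposition~\ref{Hamiltonian-Structure}) further restricts the optimum --- I would argue that any real antisymmetric $G$ is attainable, so the optimization over $G$ is indeed free. The secondary subtlety is the passage from ``invertible $C^TC$'' to the general case: I would show the supremum over invertible $C$ equals the supremum over all $C$ by a standard perturbation $C \mapsto C + \epsilon\, C_0$ argument, using continuity of the objective where defined and the fact that $(C^TC)^{-1}$ composed with $|C^TK - K^TC|^2$ stays bounded because $C^TK - K^TC$ vanishes on the kernel directions of $C$ (each entry of $K$ carries a factor $C_{\alpha\beta}$), so the potentially singular inverse acts only on the range. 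Making that last cancellation precise is where the real work lies.
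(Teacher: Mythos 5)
Your proposal is correct and follows essentially the same route as the paper: your Cauchy--Schwarz/dual-norm elimination of $G$ under the quadratic constraint $\tr(G^{T}C^{T}C\,G)=1$ is the same computation as the paper's Lagrange-multiplier step (yielding the identical optimizer $G\propto (C^{T}C)^{-1}(C^{T}K-K^{T}C)$ and the same value $\sqrt{\tr\big(|C^{T}K-K^{T}C|^{2}(C^{T}C)^{-1}\big)}$), and your perturbation argument for singular $C^{T}C$ plays the role of the paper's density-plus-continuity lemma. The one caveat you explicitly raise---whether the unconstrained optimizer is realizable, given that $G=\{H_{I,ij,ij}\}$ must be antisymmetric while $(C^{T}C)^{-1}(C^{T}K-K^{T}C)$ generally is not---is left unresolved in your sketch exactly as it is in the paper's own proof, which likewise treats all entries of $G$ as independent variables.
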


\section{Conclusion}
In this letter, we established fundamental limits on the rate at which bipartite entanglement can be generated through Hamiltonian dynamics when only finite physical resources-specifically, a bounded energy variance-are available. Using the relative entropy of entanglement, we derived a closed analytical expression for the instantaneous entanglement generation rate of arbitrary pure states under arbitrary Hamiltonians. Our analysis revealed a clear separation between mean-energy and variance constraints: while a bound on the mean energy alone is insufficient to control the entanglement generation rate, imposing a variance bound ensures a finite, well-defined maximum.

We fully characterized the Hamiltonians that achieve this optimal rate, showing that only the imaginary components of the Hamiltonian in the Schmidt basis contribute to instantaneous entanglement growth. For systems without ancillary assistance, we obtained a compact closed formula for the maximal entanglement generation rate in terms of the surprisal variance of the Schmidt coefficients. This leads to an explicit family of optimal initial states as well as a simple optimal Hamiltonian structure. When local ancillas are allowed, we introduced a matrix-analytic framework to describe the admissible Hamiltonians, proved several structural invariances, and derived an explicit optimization problem characterizing the enhanced achievable rates.

Our results offer a sharp and operational characterization of the energetic resources required for generating entanglement through dynamical processes. They also motivate several natural directions for future exploration, including extending the analysis to mixed initial states, studying higher-order energetic constraints, and investigating many-body settings where locality or symmetry of the Hamiltonian plays an essential role.

\textit{Acknowledgment:} I thank Alexander Streltsov, Manfredi Scalici and Marco Fellous-Asiani for useful discussions. MN was funded by the European Commission by the QuantERA project ResourceQ under the grant agreement UMO2023/05/Y/ST2/00143.

\bibliography{Refs}

\newpage

\section*{Supplemental Material}

\subsection*{S1. Proof of Proposition \ref{Simple-Expression}}
Without loss of generality, we assume $d_{A}\leq d_{B}$ and $C_{i}>0$. We have: \begin{align} \ket{\psi}\bra{\psi}=\sum_{i,j}^{d_{A}}C_{i}C_{j}\ket{i}\bra{j}\otimes \ket{i}\bra{j}. \end{align} Thus, by tracing out the system B: \begin{align} \rho_{A}=Tr{\ket{\psi}\bra{\psi}}=\sum_{f=1}^{d^{A}}\sum_{i>j}^{d_{A}}C_{i}C_{j}\braket{f|i}\braket{j|f}=\sum_{i=1}^{d_A}C_{i}^{2}\ket{i}\bra{i}. \end{align} Since $\rho_{A}$ is diagonal in the Schmidt basis, we have: \begin{align} \log(\rho_{A})=2\sum_{i=1}^{d_{A}}\log(C_{i})\ket{i}\bra{i}. \end{align} Obtaining an expression for $\frac{d\rho_A}{dt}$, first we compute $\frac{d}{dt}\ket{\psi}\bra{\psi}$ by the Heisenberg equation: \begin{align} \frac{d}{dt}\ket{\psi}\bra{\psi}=-i [H,\ket{\psi}\bra{\psi}]. \end{align} We have: \begin{align} H\ket{\psi}\bra{\psi}=\big(\sum_{k,l=1}^{d_A}\sum_{m,n=1}^{d_{B}}H_{kl,mn}\ket{k}\bra{l}\otimes \ket{m}\bra{n}\big)\notag\\ \big(\sum_{i,j=1}^{d_{A}}C_{i}C_{j}\ket{i}\bra{j}\otimes \ket{i}\bra{j}\big) \notag\\ =\sum_{m=1}^{d_{B}}\sum_{k,i,j=1}^{d_{A}}C_{i}C_{j}H_{ki,mi}\ket{k}\bra{j}\otimes \ket{m}\bra{j} \\ := \sum_{m=1}^{d_{B}}\sum_{k,l=1}^{d_{A}}C_{l}l_{1,km}\ket{k}\bra{l}\otimes \ket{m}\bra{l}. \notag \end{align} Similarly: \begin{align} \ket{\psi}\bra{\psi}H=\sum_{n=1}^{d_{B}}\sum_{l,i,j}^{d_{A}}C_{i}C_{j}H_{jl,jn}\ket{i}\bra{l}\otimes \ket{i}\bra{n} \\ :=\sum_{n=1}^{d_{B}}\sum_{k,l=1}^{d_{A}}C_{k}l_{0,ln} \ket{k}\bra{l}\otimes \ket{k}\bra{n} \notag \end{align} where $l_{0,ln}:= \sum_{m=1}^{d_A}C_{m}H_{ml,mn}$ and $l_{1,km}:=\sum_{n=1}^{d_{A}}C_{n}H_{km,mn}$. Hence, for $\frac{d\rho_A}{dt}$ we obtain: \begin{align} \frac{d\rho_A}{dt}=i\tr_{B}\big(\sum_{n=1}^{d_{B}}\sum_{k,l=1}^{d_{A}}C_{k}l_{0,ln} \ket{k}\bra{l}\otimes \ket{k}\bra{n}\big)\notag\\ -i\tr_{B}\big(\sum_{m=1}^{d_{B}}\sum_{k,l=1}^{d_{A}}C_{l}l_{1,km}\ket{k}\bra{l}\otimes \ket{m}\bra{l}\big)\notag \\ =i\sum_{k,l=1}^{d_{A}}(C_{k}l_{0,lk}-C_{l}l_{1,lk})\ket{k}\bra{l}_{A},\end{align} and consequently \begin{align} \dot{\rho}_A\log{\rho_{A}}=2i\sum_{k,l=1}^{d_{A}}(C_k l_{0,lk}-C_{l} l_{1,kl})\log C_{l} \ket{k}\bra{l}. \end{align} and \begin{align} \Gamma(\psi,H)=-Tr(\dot{\rho_{A}}\log\rho_{A})=-2i\sum_{i=1}^{d_{A}}(l_{0,ii}-l_{1,ii})C_{i}\log C_{i}. \end{align} However, we could find simpler expression for $\Gamma(\psi,H)$ by simplifying the terms $l_{0,ii}-l_{1,ii}$ as follows: \begin{align} l_{0,ii}-l_{1,ii}=\sum_{m=1}^{d_{A}}C_{m}(H_{mi,mi}-H_{im,im})\notag \\ =\sum_{m=1}^{d_{A}}C_{m}(H_{mi}-H_{mi}^{*})=2i \sum_{m=1}^{d_{A}}C_{m}H_{I,mi}, \end{align} which leads to: \begin{align} \Gamma(\psi,H)=4 \sum_{m,i=1}^{d_{A}}C_{m}C_{i}\log C_{i} H_{I,mi}= 4 \sum_{i>j}^{d_{A}}C_{i}C_{j}\log(\frac{C_{i}}{C_{j}})H_{I,ji} \end{align} where in the second equality we used the fact that $H_{I,mi}=-H_{I,im}$. 

\subsection*{S2. Proof of Eq. \ref{Variance-2Part} }

             We have:
             \begin{align}
                 H^{2}=\sum_{k,j=1}^{d_{A}}\sum_{m,g}^{d_{B}}\big(\sum_{l=1}^{d_{A}}\sum_{n}^{d_{B}}H_{kl,mn}H_{lj,ng}\big)\ket{km}\bra{jg}.
             \end{align}

Therefore:
\begin{align}
    \bra{\psi}H^2\ket{\psi}=\sum_{i,j=1}^{d_{A}}C_{i}C_{j}\big(\sum_{l=1}^{d_{A}}\sum_{n=1}^{d_{B}}H_{il,in}H_{lj,nj}\big)\notag \\
    =\sum_{i=1}^{d_{A}}C_{i}^{2}(\sum_{l,n}^{d_{A},d_{B}}|H_{il,in}|^{2})+\sum_{i>j}^{d_{A}}C_{i}C_{j}\big(\sum_{l,n=1}^{d_{A},d_{B}}H_{il,in}H_{lj,nj}+H_{jl,jn}H_{li,ni}\big).
\end{align}  

Now, we may separate the terms with imaginary and real parts of $H$:
\begin{align}
 \bra{\psi}H^2\ket{\psi}=\sum_{i=1}^{d_{A}}(\sum_{l,n=1}^{d_{A},d_{B}}|H_{R,il,in}|^2)+2\sum_{i>j}^{d_{A}}C_{i}C_{j}(\sum_{l,n}^{d_{A},d_{B}}H_{R,il,in}H_{R,lj,nj})\notag \\   
 \sum_{i=1}^{d_{A}}C_{i}^2(\sum_{l,n=1}^{d_{A},d_{B}}|H_{I,il,in}|^2)-2\sum_{i>j}^{d_{A}}C_{i}C_{j}(\sum_{l,n}^{d_{A},d_{B}}H_{I,il,in}H_{I,lj,nj}).
\end{align}
We have:
\begin{align}
    \sum_{i=1}^{d_{A}}(\sum_{l,n=1}^{d_{A},d_{B}}|H_{R,il,in}|^2)+2\sum_{i>j}^{d_{A}}C_{i}C_{j}(\sum_{l,n}^{d_{A},d_{B}}H_{R,il,in}H_{R,lj,nj})\\=\bra{\psi}H_{R}H_{R}^{T}\ket{\psi} 
    \end{align}
    and
    \begin{align}
    \sum_{i=1}^{d_{A}}C_{i}^2(\sum_{l,n=1}^{d_{A},d_{B}}|H_{I,il,in}|^2)-2\sum_{i>j}^{d_{A}}C_{i}C_{j}(\sum_{l,n}^{d_{A},d_{B}}H_{I,il,in}H_{I,lj,nj})\notag\\=\sum_{i,j}^{d_{A}}C_{i}C_{j}(\sum_{l,n}^{d_{A},d_{B}}H_{I,il,in}H_{I,jl,jn})=\bra{\psi}H_{I}H_{I}^{T}\ket{\psi}.
\end{align}
Now using the lemma \ref{Mean-Energy-Expression}, we can express  $\Delta E^2(\psi,H)$ as follows:
\begin{align}
     \Delta E^2(\psi,H)=\Delta E^2(\psi,H_{R})+\bra{\psi}H_{I}H_{I}^{T}\ket{\psi}
\end{align}
and the proof is complete.

\subsection*{S3. Proof of Theorem \ref{Maximize-EqSet}}

    First we show that the constraint $\bra{\psi}H_{I}H_{I}^{T}\ket{\psi}=1$ is equivalent to the constraint imposed by the following equations:
    \begin{align}
        \sum_{i}C_{i}k_{i}=0\\
        \sum_{i}k_{i}^2=1.
    \end{align}
    Indeed, if $\bra{\psi}H_{I}H_{I}^{T}\ket{\psi}=1$, as $H_{I}$ is antisymmetric and $C_{k}>0$ then $\sum_{i}C_{i}k_{i}=\sum_{i,j}C_{i}C_{j}H_{I,ij}=0$. Moreover:
    \begin{align}
        \bra{\psi}H_{I}H_{I}^{T}\ket{\psi}=1 \longrightarrow \braket{\psi'|\psi'}=\sum_{i}k_{i}^2=1.
    \end{align}
    We must also show that for any real tuples $(k_1,...,k_{d})$ satisfying $\sum_{i}k_{i}^2=1$ and $\sum_{i}C_{i}k_{i}=0$, we can find an $H_{I}$ such than $\bra{\psi}H_{I}H_{I}^{T}\ket{\psi}=1$. Note that we need to solve the following system of equations for $H_{I}$:
    \begin{align}
    H\ket{\psi}=\sum_{i}k_{i}\ket{ii}
    \end{align}
    which is a linear system of equations with $d-1$ independent equations and $\frac{d(d-1)}{2}$ number of variables (because $H_{I,kk}=0$ and $H_{I,ij}=-H_{I,ji}$) which has solutions for $d\geq2$ as the number of variables is greater than or equal to the number of independent equations. Thus the two constraints are equivalent for the problem $\text{max}_{H}\Gamma(\psi,H)$.

    By applying the Lagrange multipliers method we arrive at the set of equations in \ref{Maximize-EqSet}. 

    \subsection*{S4. Proof of Proposition \ref{Hamiltonian-Structure} }

    \medskip
    \noindent
     Assume that \( H \) is local unitarily equivalent to a Hamiltonian of the form
    \[
        \mathcal{I}_{A'} \otimes H_{AB} \otimes \mathcal{I}_{B'}.
    \]
    That is, there exist local unitaries \( U_{AA'} \) on \( \mathcal{H}_{A'} \otimes \mathcal{H}_A \) and \( U_{BB'} \) on \( \mathcal{H}_B \otimes \mathcal{H}_{B'} \) such that
    \[
        H = \left( U_{AA'} \otimes U_{BB'} \right)
        \left( \mathcal{I}_{A'} \otimes H_{AB} \otimes \mathcal{I}_{B'} \right)
        \left( U_{AA'}^\dagger \otimes U_{BB'}^\dagger \right).
    \]
    Consider the orthonormal product basis \( \{ \ket{i}_{A'} \ket{j}_A \ket{l}_B \ket{k}_{B'} \} \). Define the transformed basis vectors:
    \[
        \ket{\psi_{(ij),(kl)}} := \left( U_{AA'}^\dagger \otimes U_{BB'}^\dagger \right) \ket{i}_{A'} \ket{j}_A \ket{l}_B \ket{k}_{B'}.
    \]
    Then in this basis, the matrix elements of \( H \) are:
    \begin{align*}
        \bra{\psi_{(ij),(kl)}} H \ket{\psi_{(ab)(cd)}} 
        = \bra{i}_{A'} \bra{j}_A \bra{l}_B \bra{k}_{B'} 
       \\ \left( \mathcal{I}_{A'} \otimes H_{AB} \otimes \mathcal{I}_{B'} \right)
        \ket{a}_{A'} \ket{b}_A \ket{d}_B \ket{c}_{B'} \\
        = \delta_{ia} \cdot \delta_{kc} \cdot \bra{j}_A \bra{k}_B H_{AB} \ket{b}_A \ket{c}_B \\
        = \delta_{ia} \cdot \delta_{kc} \cdot H_{jl,bd}.
    \end{align*}
    Thus, the matrix elements of \( H \) (expressed in the original basis before the unitary transformation) must satisfy:
    \[
        H_{(ij)(ab),(kl)(cd)} = 
        \begin{cases}
            H_{jb,ld} & \text{if } i = a \text{ and }  k= c, \\
            0 & \text{otherwise}.
        \end{cases}
    \] 
    Moreover, since $H$ is local unotarily equivalent to $\mathcal{I}_{A'} \otimes H_{AB} \otimes \mathcal{I}_{B'}$, If the local unitary $U$ takes the Schmidt basis to a product basis, in the updated basis, $UHU^{\dagger}$ generally has the following form:
    \begin{align}
        UHU^{\dagger}=\sum_{i,j}\ket{i}_{A'}\bra{i}\otimes H^{ij}_{AB}\otimes \ket{j}_{B'}\bra{j}
    \end{align}
    where $H^{ij}_{AB}$ are Hamiltonians acting on $\mathcal{H_{A}\otimes H_{B}}$. If we choose $U$ such that $U\ket{\alpha\beta}\ket{\alpha\beta}=\ket{\alpha}_{A'}\ket{\beta}_{A}\ket{\beta}_{B}\ket{\alpha}_{B'}$, as $dim(\mathcal{H_{A}})=dim(\mathcal{H_{B}})$, there exist controlled local unitary $U_{C}$ (controlled by $A'$ and $B'$)  such that:
    $U_{C}(UHU^{-1})U_{C}^{\dagger}= \mathcal{I}_{A'} \otimes H^{*}_{AB} \otimes \mathcal{I}_{B'}$ for some Hamiltonian $H_{AB}^{*}$. Thus relabeling the Schmidt basis states via $\mathcal{I}_{A'} \otimes H^{*}_{AB} \otimes \mathcal{I}_{B'}$ and the unitaries $U_{C}U$, we must have $\{\ket{\alpha' \beta'}\ket{\alpha'\beta'}\}_{\alpha'=1,\beta'=1}^{d_{A'},d_{A}}=\{\ket{f(\alpha\beta)}\ket{g(\alpha\beta)}\}$.

\subsection{S5. Proof of Theorem \ref{Charac-With-Ancilla}}

Using the problem $p_2$ in Lemma \ref{Equiv-Charac} and Proposition \ref{Hamiltonian-Structure}, we have a characterization of our primary problem as: 
\begin{align}
    \max_{H,\psi} \Gamma(\psi,H)=\max_{\psi}\max_{H}\Gamma(\ket{\psi},H) \text{  s.t  }\notag \\
   \Delta H_{\ket{\psi}}=1 \notag \\
    H_{(ij)(kl),(ab)(cd)} = 
        \begin{cases}
            H_{jl,ac} & \text{if } i = k \text{ and } b = d, \\
            0 & \text{otherwise}.
        \end{cases}
      \end{align}
      Let express \( \ket{\psi} \) using a shorthand notation:
\[
\ket{\psi} = \sum_{(\alpha\beta)} C_{\alpha\beta} \ket{(\alpha\beta)}_{AA'} \otimes \ket{(\alpha\beta)}_{BB'} = \sum_{(\alpha\beta)} C_{\alpha\beta} \ket{(\alpha\beta)(\alpha\beta)}.
\]

We now expand the expectation value:
\begin{align*}
\bra{\psi} H_I H_I^T \ket{\psi}
= \sum_{(\alpha\beta),(\gamma\delta)} C^*_{\alpha\beta} C_{\gamma\delta} \bra{(\alpha\beta)(\alpha\beta)} H_I H_I^T \ket{(\gamma\delta)(\gamma\delta)} \\
= \sum_{(\alpha\beta),(\gamma\delta)} C_{\alpha\beta} C_{\gamma\delta} \sum_{(ij)} H_{I,(\alpha\beta)(ij),(\alpha\beta)(kl)} H_{I,(\gamma\delta)(ij),(\gamma\delta)(kl)}.
\end{align*}

Assume the Hamiltonian has the following structure:
\[
H_{(ij)(kl),(ab)(cd)} =
\begin{cases}
H_{jl,ac} & \text{if } i = k \text{ and } a= c, \\
0 & \text{otherwise}.
\end{cases}
\]

This structure implies that:
\[
H_{(\alpha\beta)(ij),(\alpha\beta)(kl)} \neq 0 \iff i = k=\alpha.
\]
and in that case:
\[
H_{I,(\alpha\beta)(ij),(\alpha\beta)(kl)} =H_{I,(\alpha\beta)(\alpha j),(\alpha\beta)(\alpha l)} 
\]

Thus the expression becomes:
\begin{align*}
\bra{\psi} H_I H_I^T \ket{\psi}
= \sum_{\alpha,\beta,\gamma,\delta} C_{\alpha\beta} C_{\gamma\delta} \\\sum_j  H_{I,(\alpha\beta)(\alpha j),(\alpha\beta)(\alpha l)} H_{I,(\alpha\delta)(\alpha j),(\alpha\delta)(\alpha l)}\\
=\sum_{\alpha}^{d_{A'}}\sum_{\beta,\delta}^{d_{A}} C_{\alpha\beta} C_{\alpha\delta} \sum_{j,l}^{d_{A}}  H_{I,\beta j,\beta l} H_{I,\delta j,\delta l}\\
= \sum_{\alpha}^{d_{A'}}\sum_{\beta,\delta}^{d_{A}} C_{\alpha\beta} C_{\alpha\delta} \sum_{j=l}^{d_{A}}  H_{I,\beta j,\beta j} H_{I,\delta j,\delta j}\\+\sum_{\alpha}^{d_{A'}}\sum_{\beta,\delta}^{d_{A}} C_{\alpha\beta} C_{\alpha\delta} \sum_{j\neq l}^{d_{A}}  H_{I,\beta j,\beta l} H_{I,\delta j,\delta l}\\
=\sum_{\alpha, j}^{d_{A'},d_{A}} (\sum_{\beta}^{d_{A}}C_{\alpha,\beta}H_{I,\beta j,\beta j})^2+\bra{\psi}H'H^{'T}\ket{\psi},
\end{align*}
where $H'$ is obtained by setting the elements $H_{I,\beta j, \beta j}\to 0\text{  }\forall \beta,j$ in the matrix $H_{I}$. Similarly for $\Gamma(\psi, H)$, we have:
\begin{align*}
    \Gamma(\psi,H)=2\sum_{(\alpha \beta),(\gamma \delta)}^{d_{A'}*d_A}C_{\alpha\beta}C_{\gamma \delta}\log(\frac{C_{\alpha \beta}}{C_{\gamma \delta}}) H_{I,(\gamma \delta)(\alpha \beta),(\gamma \delta)(\alpha \beta)}
\end{align*}
and considering the structure of the Hamiltonian, we obtain:
\begin{align*}
    \Gamma(\psi,H)=2\sum_{\alpha}^{d_{A'}}\sum_{ \beta, \delta}^{d_A}C_{\alpha\beta}C_{\alpha\delta}\log(\frac{C_{\alpha \beta}}{C_{\alpha \delta}}) H_{I, \delta \beta,\delta \beta}
\end{align*}

Therefore, we may characterize the problem as following:
\begin{align}
    \max_{H,\psi} 2\sum_{\alpha}^{d_{A'}}\sum_{ \beta, \delta}^{d_A}C_{\alpha\beta}C_{\alpha\delta}\log(\frac{C_{\alpha \beta}}{C_{\alpha \delta}}) H_{I, \delta \beta,\delta \beta} \text{  s.t  }\notag \\
   \sum_{\alpha, j}^{d_{A'},d_{A}} (\sum_{\beta}^{d_{A}}C_{\alpha,\beta}H_{I,\beta j,\beta j})^2+\bra{\psi}H'H^{'T}\ket{\psi}=1 \notag \\
   \sum_{\alpha,\beta}C_{\alpha,\beta}^2 =1.
      \end{align}

Since the objective function depends only on the elements $H_{I,\beta j,\beta l}$ of $H_{I}$ for which $j=l$ and both the terms in the second constraint are positive i.e $\sum_{\alpha, j}^{d_{A'},d_{A}} (\sum_{\beta}^{d_{A}}C_{\alpha,\beta}H_{I,\beta j,\beta j})^2 \geq 0$ and $\bra{\psi}H'H^{'T}\ket{\psi} \geq 0$,
    the problem will be equal to the following one whose first constraint is more simplified:
\begin{align}
    \max_{H,\psi} 2\sum_{\alpha}^{d_{A'}}\sum_{ \beta, \delta}^{d_A}C_{\alpha\beta}C_{\alpha\delta}\log(\frac{C_{\alpha \beta}}{C_{\alpha \delta}}) H_{I, \delta \beta,\delta \beta} \text{  s.t  }\notag \\
   \sum_{\alpha, j}^{d_{A'},d_{A}} (\sum_{\beta}^{d_{A}}C_{\alpha,\beta}H_{I,\beta j,\beta j})^2=1 \notag \\
   \sum_{\alpha,\beta}C_{\alpha,\beta}^2 =1.
      \end{align}

\subsection*{S6. Proof of Theorem \ref{Max-With-Ancilla}}

Applying Lagrange multipliers method on the variables $H_{I,\beta^{*}\delta^{*}}$, we obtain the following set of equations:
\begin{align}
    \sum_{\alpha}^{d_{A'}}C_{\delta^{*}\alpha}(C^{T}_{\alpha\beta^{*}}\log C_{\alpha\beta^{*}})-\sum_{\alpha}^{T}(C^{T}_{\alpha\delta^{*}}\log C_{\alpha\delta^{*}})\notag \\
    -\lambda_{1} \sum_{\alpha,\delta}^{d_{A'},d_{A}} C_{\delta^{*}\alpha}^{T}C_{\alpha\delta}H_{I,\delta\beta^{*},\delta\beta^{*}}=0,
\end{align}
which can be written equivalently:
\begin{align}
    \{C^TK\}_{\delta^{*}\beta^{*}}-\{C^TK\}_{\beta^{*}\delta^{*}}=\lambda_{1}\{C^TCG\}_{\delta^{*}\beta^{*}},
\end{align}
which is also equivalent to:
\begin{align}
    C^TK-K^TC=\lambda_{1}C^TCG.
    \label{F1=F2F3}
\end{align}
We state and prove the following lemma.
\begin{lemma}
    Consider the two sets:
    \begin{align}
        S_{1}:=\{C^TC: \det(C^TC)\neq 0\},
    \end{align}
    and 
    \begin{align}
        S_{2}:=\{C^TC\}.
    \end{align}
    Then $S_{1}$ is dense in $S_{2}$ (with respect to the trace norm $||.||_{1}$).
    \label{Some-Density}
\end{lemma}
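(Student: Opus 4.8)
The plan is to identify $S_2$ with the closed cone of positive semidefinite real symmetric $d_A\times d_A$ matrices, to identify $S_1$ with its interior -- the positive definite matrices -- and then to invoke the elementary fact that the positive definite cone is dense in the positive semidefinite cone for any matrix norm, in particular for $\|\cdot\|_1$.

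First I would give intrinsic descriptions of the two sets. Every matrix of the form $C^{T}C$ is positive semidefinite; conversely, for any positive semidefinite $P$ one has $P=(\sqrt{P})^{T}\sqrt{P}$ with $\sqrt{P}$ the symmetric positive semidefinite square root, so $S_2$ is exactly the set of all positive semidefinite $d_A\times d_A$ matrices (this uses that the admissible shape of $C$ has at least as many rows as columns, which is already implicit since otherwise $S_1$ would be empty; in the setting of Theorem~\ref{Max-With-Ancilla}, where $C$ is invertible, $C$ is square). For $S_1$, the condition $\det(C^{T}C)\neq 0$ holds precisely when the positive semidefinite matrix $C^{T}C$ is in fact positive definite; and conversely every positive definite $P$ belongs to $S_1$, since $C:=\sqrt{P}$ gives $C^{T}C=P$ with $\det(C^{T}C)=\det P\neq 0$. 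Hence $S_1$ is the set of positive definite matrices and $S_2\supseteq S_1$ the set of positive semidefinite ones.

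It then remains to approximate an arbitrary $P\in S_2$ by elements of $S_1$ in trace norm. For $\delta>0$ the matrix $P+\delta I$ is positive definite, hence lies in $S_1$, and
\begin{align*}
\|(P+\delta I)-P\|_{1}=\|\delta I\|_{1}=\delta\, d_A ,
\end{align*}
which tends to $0$ as $\delta\to 0$, so every $P\in S_2$ is a trace-norm limit of points of $S_1$, which is the claim. If one also wishes to keep the normalization $\sum_{ij}|C_{ij}|^{2}=\tr(C^{T}C)=1$ used in Theorem~\ref{Max-With-Ancilla}, one simply replaces the perturbation by $(1-\delta)P+(\delta/d_A)I$, which is positive definite, has unit trace, and still converges to $P$ in trace norm as $\delta\to 0$. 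There is no real obstacle here: the only point requiring a little care is that the perturbed matrix must remain genuinely of the form $C^{T}C$ with invertible $C^{T}C$ -- which is exactly what the square-root construction $C=\sqrt{P+\delta I}$ guarantees -- and that closeness is measured in the trace norm rather than merely in operator norm, which is harmless since $\|\delta I\|_{1}=\delta\, d_A$ is linear in $\delta$.
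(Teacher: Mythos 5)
Your proof is correct, and it rests on the same underlying fact as the paper's: the positive definite matrices are dense in the positive semidefinite ones. The execution differs in a way worth noting. The paper first takes a polar decomposition $C=UP$, observes $C^TC=P^TP$, cites the density of invertible positive matrices to approximate the square root $P$ by an invertible $P'$, and then asserts that the corresponding $A'=P'^TP'$ is close to $A=P^TP$ in trace norm; that last step silently uses the continuity of the map $P\mapsto P^TP$ (harmless on bounded sets, but not spelled out). You instead perturb $C^TC$ itself, replacing $P$ by $P+\delta I$ with $\|(P+\delta I)-P\|_1=\delta\,d_A$, and recover an admissible $C$ via the square root $C:=\sqrt{P+\delta I}$, so no continuity argument for squaring is needed -- in effect you have supplied the explicit proof of the density fact that the paper only cites, and applied it one level up. You also do two things the paper does not: you verify that $S_2$ is exactly the positive semidefinite cone (and flag the implicit shape assumption $d_{A'}\geq d_A$ needed for $S_1$ to be nonempty), and you show how to preserve the normalization $\tr(C^TC)=1$ via the convex perturbation $(1-\delta)P+(\delta/d_A)I$, which matters for how the lemma is actually used in Theorem~\ref{Max-With-Ancilla}. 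Both proofs are sound; yours is the tighter one.
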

\begin{proof}
    By polar decomposition theorem we can write $C=UP$ where $P$ is a positive $d_{A}\times d_{A}$ matrix and $U$ is a semi-unitary $d_{A'}\times d_{A}$ matrix. Therefore $C^TC=P^TP$. If $C^TC$ is not invertible then its square root which is $P$, is not invertible. But positive $n\times n$ invertible matrices are dense in the space of positive $n\times n $ matrices \cite{conway2019course}, thus for any $\epsilon > 0$, there exist an invertible positive $P'$ such that $||P-P'||_1<\epsilon$. Hence for any $A \in S_2$ and an arbitrary positive $\epsilon$, there exist $A'\in S_{1}$ such that $||A-A'||_1<\epsilon$.
\end{proof}

    Now we note that by the equation $ C^TK-K^TC=\lambda_{1}C^TCG$, we have:
    \begin{align}
        \tr\big((K^TC-C^TK)G\big)=2 \lambda_{1} \tr(|CG|^2).
    \end{align}
    But from the constraints of our problem $\tr(|CG|^2)=1$. Therefore $\max \Gamma = 2\lambda_1$. To obtain $\lambda_1$, we must obtain $G$ from $ C^TK-K^TC=\lambda_{1}C^TCG$ and substitute it in the constraint. Assuming $C^TC$ is invertible we have:
    \begin{align}
        G=\frac{1}{\lambda_1}(C^TK-K^TC)(C^TC)^{-1}. 
    \end{align}
    Therefore one obtains:
    \begin{align}
        \lambda_1^2= \tr\big(|C^TK-K^TC|^2 (C^TC)^{-1}\big).
    \end{align}
     Furthermore, $\lambda_1$ changes continously with respect to the changes in $C$. Indeed, defining $F_1:=C^TK-K^TC$, $F_2 := C^TC$ and $F_3 := \lambda_1 G$, we have $F_2F_3=F_1$ from the equation \ref{F1=F2F3}. Since $F_1$ and $F_2$ are smooth functions of the coefficients $C_{ij}$ and $F_{3}$ is a finite matrix (as $\tr(|CG|^2)=1$), 
     \begin{align}
\delta F_{2}\delta F_{3}+         F_2\delta F_3=\delta F_1 - \delta F_2 F_3.
     \end{align}
     Using the triangle inequality, we can write:
     \begin{align}
         ||\delta F_3 ||_{1}\leq \frac{1}{||F_2+\delta F_{2}||_{1}}\big( ||\delta F_1||_{1}+||F_3||_{1}||\delta F_2||_{1}\big).
     \end{align}
 Thus, $F_3=\lambda_1 G$ continuously changes with respect to the coefficient $C_{ij}$. From equation $\tr (|CG|^2)=1$, $G$ also changes continuously, and therefore $\lambda_1$. 
 
 Hence, by Lemma \ref{Some-Density} as the invertible matrices of the form $C^TC$ are dense in the space of the matrices of the form $C^TC$, 
    \begin{align}
        \max \Gamma = 2\sqrt{\sup_{C}\tr\big(|C^TK-K^TC|^2 (C^TC)^{-1}\big)} 
    \end{align}
    and the proof is complete.

 \end{document}